\newif\ifdraft
\newtheorem{theorem}{Theorem}
\newtheorem{lemma}[theorem]{Lemma}
\newtheorem{definition}[theorem]{Definition}
\newtheorem{note}[theorem]{Note}
\newtheorem{proposition}[theorem]{Proposition}
\newcommand{\secQuickRef}[1]{\hyperref[#1]{\S\ref{#1}}}
\renewcommand{\epsilon}{\varepsilon}
\let\oldnl\nl
\newcommand{\nonl}{\renewcommand{\nl}{\let\nl\oldnl}}
\newcommand{\commentColor}{\color{gray}}
\newcounter{todocounter}
\newcommand{\ie}[2][Ittay]{\textbf{\color{green!70!black}[#2 -ie]}\marginnote{\color{green!70!black}\textup{*}}}
\newcommand{\alg}{\ensuremath{ \mathcal{A} }\xspace}
\newcommand{\algoComment}[1]{{\hfill \color{gray}(#1)}}
\newcommand{\vTrue}{\ensuremath{ \textsf{True} }\xspace}
\newcommand{\vFalse}{\ensuremath{ \textsf{False} }\xspace}
\newcommand{\parent}[1]{\ensuremath{ \textit{parent}(#1) }\xspace}
\newcommand{\LOneRef}[1]{\ensuremath{ \textit{ref}_{A \to P}(#1) }\xspace}
\newcommand{\LOneResetRef}[1]{\ensuremath{ \textit{ref}_{A \to P}^{\:\textit{reset}}(#1) }\xspace}
\newcommand{\LTwoRef}[1]{\ensuremath{ \textit{ref}_{P \to A}(#1) }\xspace}
\newcommand{\blockParent}[1]{\ensuremath{ \textit{parent}(#1) }\xspace}
\newcommand{\blockID}[1]{\ensuremath{ \textit{id}(#1) }\xspace}
\newcommand{\isValid}{\ensuremath{ \textit{isValid} }\xspace}
\newcommand{\blockHeight}[1]{\ensuremath{ \textit{height}(#1) }\xspace}
\newcommand{\blockTime}[1]{\ensuremath{ t_\textit{gen}(#1) }\xspace}
\newcommand{\blockMembership}[1]{\ensuremath{ \textit{stakers}(#1) }\xspace}
\newcommand{\blockSet}{\ensuremath{ \mathcal{B} }\xspace}
\newcommand{\ELast}{\ensuremath{ B_\textit{last} }\xspace}
\newcommand{\EReset}{\ensuremath{ B_\textit{reset} }\xspace}
\newcommand{\deltaActive}{\ensuremath{ \Delta_\textit{active} }\xspace}
\newcommand{\deltaPrimaryWrite}{\ensuremath{ \Delta_{\primaryLedger\textit{-write}} }\xspace}
\newcommand{\deltaConsensus}{\ensuremath{ \Delta_\textit{consensus} }\xspace}
\newcommand{\gst}{\ensuremath{ t_\textit{GST} }\xspace}
\newcommand{\deltaPropagation}{\ensuremath{ \Delta_\textit{prop} }\xspace}
\newcommand{\progressStabilizationTime}{\ensuremath{ \Delta_\textit{stabilize} }\xspace}
\newcommand{\BCheckpoint}{\ensuremath{ B_\textit{checkpoint} }\xspace}
\newcommand{\bCheckpointed}{\ensuremath{ b_\textit{checkpointed} }\xspace}
\newcommand{\id}{\ensuremath{ \textit{id} }\xspace}
\newcommand{\committee}{\ensuremath{ \mathcal{N} }\xspace}
\newcommand{\inputFuncITK}[3]{v^{#1}_{#2}(#3)}
\newcommand{\consensusStepCore}{\ensuremath{ \textit{consensusStep} }\xspace}
\NewDocumentCommand{\consensusStep}{ooo}{%
    \IfNoValueTF{#1}{%
        \consensusStepCore
    }{%
        \IfNoValueTF{#2}{
            \ie{ERROR}
        }{%
            \IfNoValueTF{#3}{
                \ie{ERROR}
            }{
                \ensuremath{ \consensusStepCore(#1, #2, #3) }\xspace
            }%
        }%
    }%
}
\newcommand{\consensusValidateCore}{\ensuremath{ \textit{consensusValidate} }\xspace}
\NewDocumentCommand{\consensusValidate}{ooo}{%
    \IfNoValueTF{#1}{%
        \consensusValidateCore
    }{
        \IfNoValueTF{#2}{%
            \ie{ERROR}
        }{
            \IfNoValueTF{#3}{%
                \ie{ERROR}
            }{%
                \ensuremath{ \consensusValidateCore(#1, #2, #3) }\xspace
            }%
        }%
    }
}
\newcommand{\consensusProveDeviationCore}{\ensuremath{ \textit{consensusProveDeviation} }\xspace}
\NewDocumentCommand{\consensusProveDeviation}{oo}{\IfNoValueTF{#1}{\IfNoValueTF{#2}{\consensusProveDeviationCore}{ERROR}}{\ensuremath{ \consensusProveDeviationCore(#1, #2) }\xspace}}
\newcommand{\consensusThreshold}{\ensuremath{ \alpha }\xspace}
\newcommand{\ledger}{\ensuremath{ \mathcal{L} }\xspace}
\newcommand{\primaryLedger}{\ensuremath{ \underline{\mathcal{L}} }\xspace}
\newcommand{\bGenesis}{\ensuremath{ b_\textit{genesis} }\xspace}
\newcommand{\hash}{\ensuremath{ H }\xspace}
\let\oldcite\cite
\def\cite#1{\oldcite{\zap@space#1 \@empty}}
\begin{document}

\title{\ifdraft {\color{red} \textbf{DRAFT}}\\ \fi Aegis: Tethering a Blockchain with Primary-Chain Stake}


\newcommand{\affilLightblocks}{{$^{\includegraphics[width=0.45em]{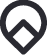}}$}}
\newcommand{\affilTAU}{$^{\ddagger}$}
\newcommand{\affilTechnion}{$^{\star}$}

\author{
Yogev Bar{-}On\affilTAU\affilLightblocks,
Roi Bar{-}Zur\affilTechnion\affilLightblocks,
Omer Ben{-}Porat\affilTechnion\affilLightblocks, 
Nimrod Cohen\affilLightblocks, \\
Ittay Eyal\affilTechnion\affilLightblocks, 
Matan Sitbon\affilLightblocks\\
\vspace{1.em}
\affilTAU Tel Aviv University \\
\affilLightblocks Lightblocks \\
\affilTechnion Technion\\
}



\renewcommand{\shortauthors}{Bar{-}On et al.}

\begin{abstract} 
Blockchains implement decentralized monetary systems and applications.
Recent advancements enable what we call \emph{tethering} a blockchain to a \emph{primary} blockchain, securing the tethered chain by nodes that post primary-chain tokens as collateral.
The collateral ensures nodes behave as intended, until they withdraw it.
Unlike a \emph{Proof of Stake} blockchain which uses its own token as collateral, using primary-chain tokens shields the tethered chain from the volatility of its own token.

State-of-the-art tethered blockchains either rely on centralization, or make extreme assumptions: that all communication is synchronous, that operators remain correct even post-withdrawal, or that withdrawals can be indefinitely delayed by tethered-chain failures.

We prove that with partial synchrony, there is no solution to the problem. 
However, under the standard assumptions that communication with the primary chain is synchronous and communication among the tethered chain nodes is partially synchronous, there is a solution. 
We present a tethered-chain protocol called \emph{Aegis}.
Aegis uses references from its blocks to primary blocks to define committees, checkpoints on the primary chain to perpetuate decisions, and resets to establish new committees when previous ones become obsolete.
It ensures safety at all times and rapid progress when latency among Aegis nodes is low.
\end{abstract}

\begin{CCSXML}
    <ccs2012>
       <concept>
           <concept_id>10010147.10010919</concept_id>
           <concept_desc>Computing methodologies~Distributed computing methodologies</concept_desc>
           <concept_significance>500</concept_significance>
           </concept>
       <concept>
           <concept_id>10002978.10003006.10003013</concept_id>
           <concept_desc>Security and privacy~Distributed systems security</concept_desc>
           <concept_significance>500</concept_significance>
           </concept>
     </ccs2012>
\end{CCSXML}

\ccsdesc[500]{Computing methodologies~Distributed computing methodologies}
\ccsdesc[500]{Security and privacy~Distributed systems security}
    
\keywords{Byzantine Consensus, BFT, Byzantine fault tolerance, Blockchain, Reconfiguration}


\maketitle


    \section{Introduction}

Blockchains, such as Ethereum~\cite{wood2015yellow}, implement monetary systems and allow users to deploy \emph{smart contracts}~\cite{szabo1996smart} with arbitrary logic.
Users issue \emph{transactions} to exchange assets and interact with smart contracts.
Blockchain~\emph{nodes} aggregate transactions into blocks, forming a chain that determines the system's state. 

As blockchain adoption grows, scalability has emerged as a critical challenge~\cite{chauhan2018blockchain,zhou2020solutions}.
Blockchains have limited throughput, often high latency, and are restricted to deterministic execution.
To address these limitations and implement new functionality, new blockchains are created~\cite{coinmarketcap2024}.
These new chains are either independent or~\emph{Layer-twos (L2s)} that expand existing blockchains%
\footnote{Not all L2s are blockchains by themselves, but many are~\cite{l2beat2024locked}.}.

A popular approach to secure blockchains is \emph{Proof of Stake (PoS)}. 
In a PoS blockchain, nodes can \emph{stake}, namely, lock their tokens as collateral~\cite{kiayias2017ouroboros,daian2019snow,damato2022attacks}. 
Until they \emph{unstake}, namely, withdraw their collateral, they are \emph{active} nodes, participating in the creation of new blocks. 
If they misbehave, they are penalized via \emph{slashing}~\cite{buterin2015slasher}, losing all or part of their collateral.

PoS works well for major blockchains that have reached a critical mass of participants and assets~\cite{coinmarketcap2024}.
Yet in smaller nascent systems, the token value is not established and depends on the blockchain's success, which in turn requires the blockchain to be secure.
However, a PoS blockchain can only be considered secure if the value of its collateral, expressed in its tokens, is high enough.
This chicken-and-egg problem poses an obstacle to newer chains, and remains a concern afterward, since token prices are often volatile~\cite{coinmarketcap2024}.

New systems like EigenLayer~\cite{eigenlayer} and Symbiotic~\cite{symbiotic} on Ethereum and Babylon~\cite{babylon2023bitcoinStaking} on Bitcoin~\cite{bitcoin2013protocol} enable staking tokens for securing external tasks.
Some of these, such as EigenLayer, implement~\emph{restaking}~\cite{durvasula2024robust}, allowing nodes to simultaneously secure both their native blockchain and additional tasks, while others, like Babylon, dedicate stake exclusively to new tasks.
These stake management systems represent a promising direction and have already garnered assets worth billions of US dollars~\cite{defillama2025restaking}.

We refer to~\emph{tethering} as using a~\emph{primary} blockchain for defining and coordinating a secondary chain's \emph{committee}, a set of nodes that create new blocks in the tethered chain using a \emph{consensus protocol}~\cite{castro1999practical}.
Stake management systems enable a particularly useful form of tethering, where the primary chain's tokens serve as collateral, shielding the tethered chain from the volatility of its own token.
Beyond addressing token volatility, tethering provides a practical solution for secure committee changes (epoch transitions), a notorious challenge in blockchain design~\cite{bano2019sok}.
Throughout this paper, we use stake-based terminology to discuss tethering, where nodes stake tokens to join committees and unstake to exit.
Like PoS blockchains, nodes are incentivized to be \emph{correct}, i.e., follow the protocol, under the threat of slashing, and may unstake at any time.


\begin{figure*}[!ht]
    \centering
    \resizebox{0.98\textwidth}{!}{ 
    \begin{tikzpicture}[
    every node/.style={rectangle, draw,minimum width=0.3cm, minimum height=.3cm},
    >=Stealth,
    bend angle=45, 
    scale=1.0, 
    transform shape=true,
]

\node[above, draw=none, anchor=west] at (-1,2.4) {Aegis chain};
\node[above, draw=none, anchor=west] at (-1,-0.5) {Primary chain};

\node[draw=none, fill=none] (A0) at (-1,2) {...};
\node[fill=magenta!30] (A1) at (2,2) {$A1$};
\node[fill=gray!30] (A2) at (5.5,2) {$A2$};
\node[fill=green!30] (A3) at (6.8,2) {$A3$};
\node[fill=green!30] (A4) at (8,2) {$A4$};
\node[fill=green!30] (A5) at (9.5,2) {$A5$};
\node[fill=yellow!30] (A6) at (10.5,2) {$A6$};
\node[fill=cyan!30] (A7) at (15.5,2) {$A7$};
\node[draw=none, fill=none] (Ar) at (16.5,2) {...};
\node[draw=none, fill=none] (Adel) at (12.5,2) {$//$};

\node[draw=none, fill=none] (P0) at (-1,0) {...};
\node[fill=magenta!30] (P1) at (0,0) {$P1$};
\node[fill=gray!30] (P2) at (1,0) {$P2$};
\node[fill=white!30] (P3) at (2.5,0) {$P3$};
\node[fill=green!30] (P4) at (4,0) {$P4$};
\node[fill=yellow!30] (P5) at (8.5,0) {$P5$};
\node[fill=white!30] (P6) at (11.5,0) {$P6$};
\node[fill=cyan!30] (P7) at (14,0) {$P7$};
\node[fill=blue!30] (P8) at (15,0) {$P8$};
\node[draw=none, fill=none] (Pr) at (16.5,0) {...};
\node[draw=none, fill=none] (Pdel) at (12.5,0) {$//$};

\node[draw=none, fill=none] (delta) at (10.5,-1) {\Large $\deltaActive$};
\node[draw=none, fill=none] (Pdelta) at (8.5,-1) {};
\node[draw=none, fill=none] (Rdelta) at (13.5,-1) {};
\node[draw=none, fill=none] (Emptydelta) at (12.5,-1) {$//$};

\draw[->, line width=0.8pt, dashed] (A1) -- (P1) node[midway, left, draw=none, scale=0.8, font=\Large] {reset};
\draw[->, line width=0.8pt,] (A1) -- (P2) node[midway, right, draw=none, scale=0.8, font=\Large] {ref};
\draw[->, line width=0.8pt,] (A2) -- (P4) node[midway, left, draw=none, scale=0.8, font=\Large] {ref};
\draw[->, line width=0.8pt,] (A3) -- (P4) node[midway, left, draw=none, scale=0.8, font=\Large] {ref};
\draw[->, line width=0.8pt,] (A4) -- (P4) node[midway, right, draw=none, scale=0.8, font=\Large] {ref};

\draw[->, line width=0.8pt,] (A5) -- (P5) node[midway, left, draw=none, scale=0.8, font=\Large] {ref};
\draw[->, line width=0.8pt,] (A6) -- (P5) node[midway, right, draw=none, scale=0.8, font=\Large] {ref};
\draw[->, line width=0.8pt, dashed] (A7) -- (P7) node[midway, left, draw=none, scale=0.8, font=\Large] {reset};
\draw[->, line width=0.8pt,] (A7) -- (P8) node[midway, right, draw=none, scale=0.8, font=\Large] {ref};

\draw[->, line width=0.8pt, bend right=20] (P3) to node[midway, right, draw=none, scale=0.8,xshift=0.0cm,yshift=-0.1cm,font=\Large] {chk} (A1);
\draw[->, line width=0.8pt, bend right=20] (P5) to node[midway, left, draw=none, scale=0.8,xshift=0.3cm,yshift=-0.45cm,font=\Large] {chk} (A3);
\draw[->, line width=0.8pt, bend right=20] (P6) to node[midway, right, draw=none, scale=0.8,xshift=0.1cm,yshift=-0.1cm, font=\Large] {chk} (A6);

\draw[<-] (A1) edge (A2) 
          (A2) edge (A3) 
          (A3) edge (A4) 
          (A4) edge (A5)
          (A5) edge (A6)
          (A6) edge (Adel)
          (A7) edge (Ar)
          (A0) edge (A1);

\draw[<-] (P1) edge (P2) 
          (P2) edge (P3)
          (P3) edge (P4) 
          (P4) edge (P5) 
          (P5) edge (P6)
          (P6) edge (Pdel)
          (P7) edge (P8)
          (P8) edge (Pr)
          (P0) edge (P1);

\draw[-|]{}
(Emptydelta) edge  (Rdelta)
(delta) edge  (Pdelta);

\draw[-]{}
(Pdel) edge (P7)
(Adel) edge (A7)
(delta) edge  (Emptydelta);

\end{tikzpicture}
    }
    \Description{Illustration of Aegis and primary-chain blocks.}
    \caption{Aegis and primary-chain blocks.}
    \label{fig:aegisTopology}
\end{figure*}

Prior work~(\secQuickRef{sec:related}) extensively explored \emph{reconfiguration} of nodes in Byzantine Fault Tolerant (BFT) systems~\cite{lamport2010reconfiguring}.
A common approach is for an old committee to select the new committee before its tenure ends.
However, in our setting, the old committee might become deprecated before it can select the new committee:
When a node unstakes on the primary chain, this should trigger a reconfiguration, but if the committee fails to reach consensus in time, the node becomes unstaked and thus no longer incentivized to be correct.

State-of-the-art tethered chains, including recent work by Dong et al.~\cite{dong2024remote} and operational systems like Polygon~\cite{polygon2022polygon2}, Avalanche Subnets~\cite{avalanche2022subnets} and Cosmos~\cite{cosmos2025interchain}, have adopted the classical approach.
These systems either disallow unstaking until a successful reconfiguration completes—risking validators' collateral being locked indefinitely if reconfiguration fails—or rely on strong assumptions to avoid the failure scenario described above:
They either assume \emph{synchronous} message delivery with bounded time, ensuring committees reach consensus in time,
or assume nodes remain correct indefinitely even after unstaking, making further reconfiguration unnecessary.

Various Layer-2 solutions also aim to extend blockchain functionality and performance~\cite{gudgeon2020sok}.
Rollups batch transactions and post summaries to the primary chain~\cite{kalodner2018arbitrum,arbitrum2023anytrust,optimism2023block,starkware2023starkex,matter2023decentralized}, but rely on centralized sequencers and are bound by primary-chain finality times.
While recent work attempts to decentralize sequencers~\cite{motepalli2023sok,capretto2024fast}, it doesn't support dynamic committees.
Other approaches include sidechains~\cite{back2014enabling} that operate as separate blockchains connected to another blockchain through bridges for asset transfer, sharding~\cite{kokoris2018omniledger,wang2019sok} that modifies the primary chain's consensus to partition validators, and payment channels~\cite{poon2013lightning,decker2015duplex} that enable rapid off-chain bilateral transactions.
Unlike these approaches, tethered chains aim to be independent decentralized chains with their own rapid progress rate that derive security from primary-chain stake, without requiring modifications to the primary chain.

To address the challenge of building a secure tethered chain, we first outline a model~(\secQuickRef{sec:model}).
Nodes can stake and unstake using the primary chain, but unstaking only completes after a delay~$\deltaActive$. 
Nodes with locked stake at the same time form a committee, which remains \emph{active} until any member completes unstaking.
As is standard~\cite{kiayias2017ouroboros,pu2023gorilla}, we assume that at all times a sufficient majority of nodes in each active committee are correct. 
The model employs a hybrid communication approach:
Nodes communicate asynchronously with each other until an unknown global stabilization time~$\gst$, after which message propagation is bounded, as in the~\emph{partial synchrony} model~\cite{dwork1988consensus}. 
However, at all times, nodes have synchronous (though slow) access to read and write to the primary chain~(\textit{cf.}~\cite{kelkar2020order}).

The choice of a hybrid communication model is not coincidental, but necessary to build a secure tethered chain~(\secQuickRef{sec:impossibility}).
To account for real-world conditions, where there may be potential outages in internode communication, especially in nascent systems, we assume partial synchrony.
However, we prove that if communication with the primary chain is also partially synchronous, no solution can be secure.
To circumvent this impossibility, we bound the access-time to the primary chain, similarly to many existing L2s~\cite{optimism2023block,arbitrum2023anytrust}.
Nevertheless, this bound can be chosen to be sufficiently large (possibly in the order of hours or days), and should not affect the progress rate of the tethered chain.

We present \emph{Aegis}%
\footnote{Named after the powerful divine shield carried by Zeus and Athena in Greek mythology, which first appears in Homer's Iliad.}%
~(\secQuickRef{sec:aegis}), a tethered blockchain.
Each \emph{Aegis block} references both its predecessor and a corresponding \emph{primary block}.
The committee defined by that primary block is the committee in charge of generating the next Aegis block using a consensus protocol.
While maintaining the Aegis chain, its nodes occasionally checkpoint the latest block with a dedicated smart contract on the primary-chain.

In the happy flow, the checkpoints are less than~$\deltaActive$ apart, so there is always an active committee.
However, creating an Aegis block might take too long due to internode communication delays.
If it takes longer than~$\deltaActive$, there would be nothing to checkpoint. 
In this case, any Aegis node can issue a \emph{reset} on the primary chain to specify a new committee. 
At this point, the committee is defined in the primary-chain reset, but not yet in any Aegis block, until one is generated.

Figure~\ref{fig:aegisTopology} illustrates the topology of the Aegis chain ($A1, A2, \dots$) and the primary chain ($P1, P2, \dots$). 
Most Aegis blocks ($A2$--$A6$) are generated by the committee referenced by their ancestor signified by the same color. 
Some are checkpointed in primary blocks ($P3, P5, P6$). 
If a checkpoint cannot be placed in time (and before the first Aegis block), a reset is issued ($P1, P7$), and the subsequent Aegis block is generated by the committee of the reset block ($A1, A7$). 

There is a risk that the committee defined by the reset and the one defined by the previous block both generate a block at the same height (distance from chain root), forming a so-called \emph{fork}. 
To prevent that, both the protocol and the primary-chain contract enforce timing constraints, avoiding consensus decisions that cannot be checkpointed and premature resets. 
 
The Aegis protocol thus comprises the primary-chain smart contract, which handles checkpoints and resets, and a distributed protocol executed by Aegis nodes, which invokes a classical single-committee consensus protocol to extend the Aegis chain.

To prove the security of the protocol~(\secQuickRef{sec:security}) we show that correct Aegis nodes never disagree on the blockchain content at any height. 
First, we use backward induction to show that if a block is created by an active committee, then all of its ancestors were created by active committees. 
This holds because this is verified for every block either by its child's committee or by the primary blockchain in a checkpoint. 
Then, using forward induction, we show that if all nodes agree up to some height, a classical consensus protocol guarantees the necessary properties for the next block. 
Similar arguments show validity, namely that if all nodes are correct and have the same input value, no other value is logged. 
We then show that after~$\gst$ the protocol guarantees an active committee forms and extends the blockchain. 
Straightforward indistinguishability arguments bridge the gap between the static committee of the standard consensus protocol and the dynamic committees of Aegis. 

We then discuss practical considerations for performance and security, as well as potential extensions of our tethering paradigm~(\secQuickRef{sec:practicalConcerns}). 
While our implementation focuses on stake-based committee management, the reset mechanism can be adapted to support different committee transition logics, such as majority decisions or after a predetermined number of blocks.

Aegis can be readily implemented to deploy various decentralized services as tethered chains.
These include scaling solutions, like rollups;
and solutions providing new functionality, like oracles and randomness sources.


    \section{Related Work} \label{sec:related}


\paragraph{Classical Reconfiguration.}
The reconfiguration problem has received intensive treatment in the classical distributed-systems literature. 
Some solutions rely on the old committee completing a full state handoff before leaving \cite{lamport2009vertical}, or finalizing and aborting any pending entries \cite{aublin2015next}; both assumptions fail if nodes exit prematurely.
Other approaches store consensus decisions in a separate reliable replica \cite{abraham2016bvp,castro1999practical,kotla2007zyzzyva,whittaker2021matchmaker}, incurring frequent reads and writes to the replica \cite{castro1999practical,kotla2007zyzzyva,whittaker2021matchmaker}, which implies additional latency.
In our setting, a cooldown period for unstaking provides a bounded window to contact the outgoing committee via the primary chain, eliminating the need for continuous access to a reconfiguration service.
Another proposal requires deprecated committees to destroy old private keys \cite{kuznetsov2022asynchronous}, but this can't be enforced in a stake-based model.

Previous work on state machine replication we are aware of sufficed with external validity. 
That is, orders that are correctly formatted are considered valid, and the only requirement is that such orders are eventually placed in the log. 
This allows values to be aggregated as well as submitted long before they are placed in the blockchain~\cite{castro1999practical}. 
This assumption is insufficient for blockchain applications, where various, possibly competing users submit transactions whose content depends on the latest system state. 
We therefore introduce a stronger validity requirement called no-batching validity. 


\paragraph{Tethered Chains.}
Tethered chains are blockchains that use tokens from a primary chain as collateral, simplifying bootstrapping compared to starting a new blockchain and shielding them from the volatility of their own token.
At first glance, tethered chains may resemble classical State Machine Replication with dynamic reconfiguration \cite{lynch1996distributed,lamport2010reconfiguring}, where the committee for block~$k$ selects the committee for block~$k+1$.
However, if the creation of block~$k+1$ is delayed, members of its committee may unstake in the meanwhile and lose their incentive to behave correctly.
Aegis addresses this by having each block reference its next committee directly from the primary chain and allowing a reset if the current committee stalls.

Existing tethered systems, including Avalanche subnets \cite{avalanche2022subnets}, Polygon~2.0 \cite{polygon2022polygon2}, and Cosmos \cite{cosmos2025interchain}, face this fundamental issue.
They either assume synchronous communication to avoid delayed block creation, or assume nodes remain correct indefinitely so delayed blocks are still handled correctly.
New tethered chains can be created using stake management solutions such as EigenLayer \cite{eigenlayer} and Symbiotic \cite{symbiotic} on Ethereum or Babylon \cite{babylon2023bitcoinStaking} on Bitcoin, these platforms leave open how to implement the chain.

BMS \cite{steinhoff2021bms} proposes a trusted reconfiguration platform that must approve each committee change, which becomes impractical if a node wants to exit but the tethered chain fails to grant timely approval.
By contrast, Aegis enables nodes to withdraw unilaterally without compromising security, thanks to primary-chain collateral tracking.

An alternative approach, merge-mining \cite{bonneau2015sok}, applies to Proof-of-Work systems \cite{nakamoto2008bitcoin,sompolinsky2021phantom} by reusing computational power across blockchains, but provides no penalty on the main chain for misbehavior on the tethered chain.
Because Aegis slashes misbehaving nodes at the primary-chain level, it preserves correct incentives even after committees become deprecated.


\paragraph{Long-Range Attacks and Checkpoints.}
Long-range attacks occur when an adversary presents an alternative history that appears indistinguishable from the genuine chain, particularly to newcomers lacking the full chain state~\cite{buterin2014long}.
These attacks enable double-spending or can disrupt consensus.
They can affect both PoW and PoS blockchains, though in PoW an attacker must hold a majority of the network's computational power (a 51\% attack \cite{bano2019sok}).
In PoS, however, only a majority of stake at some historical point is required, and once nodes withdraw, they are no longer at risk of slashing.
As a result, long-range attacks are often considered more severe in PoS settings.

Prior work has explored \emph{checkpointing} to mitigate such attacks.
Early methods relied on \emph{social consensus}, wherein an external committee vouches for the current chain state \cite{barber2012bitter,buterin2014weak,daian2019snow}.
Later studies employed BFT protocols for checkpoint creation \cite{rana2022optimal,sankagiri2021blockchain}, or used separate PoW networks to timestamp checkpoints \cite{karakostas2021securing,azouvi2022pikachu,tas2023bitcoin}.
Because Aegis already depends on the primary chain for stake management, it reuses that same chain to record checkpoints, avoiding extra trust assumptions.
Another approach, Winkle \cite{azouvi2020winkle}, suggests weighting checkpoints by the volume of subsequent transactions.


\paragraph{Layer-2 Solutions.}
Various Layer-2 (L2) solutions aim to extend the performance and functionality of a primary chain (also called Layer-1 or L1) but differ fundamentally in their approach and limitations.
\emph{Rollups} improve transaction throughput by batching transactions in a separate ledger and occasionally posting summaries to the L1 for committing changes.
Rollups come in two forms:
\emph{Optimistic} rollups~\cite{kalodner2018arbitrum,arbitrum2023anytrust,optimism2023block} rely on a dispute process in the L1 to correct, and possibly penalize, incorrect summaries; while~\emph{zero-knowledge} rollups~\cite{starkware2023starkex,matter2023decentralized} generate cryptographic proofs of summary validity.
Gudgeon et al.~\cite{gudgeon2020sok} provide a comprehensive systemization of knowledge.

A key limitation of rollups is their reliance on centralized components for transaction ordering (via \emph{sequencers}), which poses a risk of transaction censorship and manipulation for financial gain.
Recent work has explored decentralizing the sequencer~\cite{motepalli2023sok}, with Setchain~\cite{capretto2024fast} implementing Byzantine-tolerant transaction ordering, but its use of static committees limits its ability to handle dynamic stake changes.
Another limitation is that transaction finality is bound to block confirmation times in the primary chain, as changes are only finalized once a summary is finalized in the primary chain.
Aegis overcomes both limitations with its decentralized consensus and rapid finality:
Blocks can be accepted once they are created, which may be significantly sooner than until they are checkpointed.
Furthermore, Aegis can serve as a decentralized sequencer for rollups with dynamic committee reconfiguration.

\emph{Sidechains}~\cite{back2014enabling,polygon2023pos,gnosis2023chain} operate as independent blockchains that connect to another blockchain through~\emph{bridges}, protocols that enable asset transfers between chains.
These are orthogonal to tethered chains, as a tethered chain can also be a sidechain if it allows asset transfer.

\emph{Sharding}~\cite{kokoris2018omniledger,wang2019sok} takes a more invasive approach by modifying the core consensus protocol of the L1 to partition validators across multiple chains.
Unlike Aegis, which operates independently while deriving security from the primary chain, sharding requires fundamental changes to the L1 design.

Payment channels are point-to-point channels~\cite{poon2013lightning, decker2015duplex} that can aggregate several payments between two parties and form payment networks.
While they enable efficient transactions between parties, they fundamentally differ from Aegis's approach as they rely on bilateral relationships rather than decentralized consensus, with each channel exclusively controlled by its two counterparties.

Our goal differs from these approaches:
We seek to create a chain with its own rapid progress rate, which relies on a primary chain for managing its stake;
other than this dependency, it may not have any other relation to the primary chain.


\paragraph{Chain Combining.}
Several previous works combine multiple ledgers to create a ledger with better properties~\cite{fitzi2020ledger,wang2022trustboost,tas2024circuit}. 
However, all steps are taken within the used ledgers, unlike Aegis, whose tethered chain can progress by tethered-chain consensus steps faster than the rate it is checkpointed.


    \section{Model} \label{sec:model}

The system~(\S\ref{sec:model:elements}) includes a primary blockchain and a set of nodes. 
We assume a generic consensus algorithm~(\S\ref{sec:model:consensus}) that the nodes use aiming to form a tethered blockchain~(\S\ref{sec:model:goal}). 
Appendix~\ref{app:notation} summarizes the notation.


    \subsection{Principals and Network} \label{sec:model:elements}

The system comprises an unbounded set of nodes (as in~\cite{kiayias2017ouroboros,pu2023gorilla}) and a \emph{primary blockchain} (or simply \emph{primary}) denoted by $\primaryLedger$.
Time progresses in steps. 
In each step a node receives messages from previous steps, executes local computations, updates its state, and sends messages. 

The primary chain is implemented by a trusted party that maintains a state including token balance for each node and a so-called \emph{staked amount} for each node. 
Initially, each node has some initial token balance. 
The primary also maintains node-defined \emph{smart contracts} implementing arbitrary automatons. 
Nodes interact with the primary by issuing transactions that update its state and interact with smart contracts. 
The primary chain aggregates transactions into \emph{blocks} and extends the blockchain at set intervals. 
Its state is the result of parsing the series of all transactions. 
A node cannot guess the hash of a future block except with negligible probability. 

A node can issue a transaction to \emph{stake}, deposit some of their tokens, or \emph{unstake}, order the withdrawal of tokens she had previously staked. 
Unstaking completes~$\deltaActive$ steps after the order transaction is placed in the chain. 
While a node's tokens are staked, the node is considered \emph{active}. 
During this period, a node can either follow the protocol and be considered \emph{correct}, or misbehave arbitrarily.
Correct nodes adhere to the protocol while they are staked, if misbehavior could result in losing their stake.
However, once they unstake, they are no longer bound by the protocol and can behave arbitrarily without the risk of losing their stake.

The set of nodes that have staked but not yet unstaked after a specific block is called a \emph{committee}; until the time any of them completed unstaking, the committee is \emph{active}. 
In all committees, the ratio of stake held by correct nodes is greater than a threshold, denoted by~$\consensusThreshold$.

Communication among nodes is reliable and partially synchronous: 
In every execution there exists a time~$\gst$, unknown to the nodes, and message delivery is bounded only after it. 
That is, if a message is sent at time~$t$ then it arrives at all nodes by time~${\max(t, \gst) + \deltaPropagation}$. 
In particular, blocks published by the nodes are delivered to all nodes, even those that stake much later%
\footnote{In practice~\cite{bitcoin2013protocol}, this is implemented by the nodes' peer-to-peer network, where nodes keep all published blocks available. 
}. 

Communication with the primary chain is synchronous: All nodes can observe the current state of the primary chain and issue transactions that are added to it within a bounded time~$\deltaPrimaryWrite$.
This assumption is stronger than the one for internode communication, but it is more reasonable in this case because we rely on an already established primary chain.
In practice, this assumption is met by choosing conservative bounds for the time it takes to write to the primary chain, even in times of congestion in the primary chain and perhaps even in the face of a DoS attack.

System behavior is orchestrated by a \emph{scheduler}. 
The scheduler determines message arrival times, subject to the constraints above. 
It also determines when nodes stake and unstake--to represent the fact this decision is exogenous to the protocol that the participating nodes execute. 


        \subsection{Consensus Algorithm} \label{sec:model:consensus}

We are given a (single-instance) Byzantine fault tolerant consensus protocol designed for partial synchrony (e.g.,~\cite{dwork1988consensus,castro1999practical}) implemented in a function $\consensusStep$. 
Called by a process, $\consensusStep$ takes steps in the protocol, maintaining a local state as necessary. 
It is executed by a weighted set of nodes $\committee$, given as a parameter. 
(We will later use the staked amount as the weight, thus $\committee$ will constitute a committee.) 
It takes a consensus instance ID, which is a pair of the set of nodes $\committee$ and a nonce \id to distinguish different instances. 
It also takes a block~$b$ the caller wishes to propose. 
It returns either a block if it reached a decision or~$\bot$ otherwise: 
$\consensusStep(b; \committee, \textit{id}) \rightarrow \textit{block} \textup{ or } \bot$.

If starting at~$t_0$, the nodes in $\committee$ call $\consensusStep$ in every step~$t \ge t_0$ with the same~$(\committee, \id)$ and the aggregate weight of correct nodes (those who follow the protocol) is greater than~$\consensusThreshold$, then the following guarantees hold.
\begin{description}

\item[Agreement] For all correct nodes~${i, j}$ (maybe~$i = j$) and times~$t, t'$ (maybe~${t = t'}$), if node~$i$'s $\consensusStep$ call at step~$t > t_0$ with ID $(\committee, \textit{id})$ returns~$b \neq \bot$ and node~$j$'s $\consensusStep$ call at step~$t' > t_0$ with ID $(\committee, \textit{id})$  returns~$b' \neq \bot$, then~$b = b'$. 

\item[Validity] If all nodes in~$\committee$ are correct, and they all execute $\consensusStep$ with the same ID and the same proposal~$b$, then their $\consensusStep$ call never returns a non-$\bot$ value~$b' \neq b$. 

\item[Termination] 
For all correct nodes~$i$ and times~$t > \max\{t_0, \gst\} + \deltaConsensus$, node~$i$'s $\consensusStep$ call with ID $(\committee, \textit{id})$ returns a non-$\bot$ value. 

\end{description}

The agreement property implies that a block, which a correct node has voted for, cannot be overturned.
Protocols that do not satisfy this property are out of scope.

We require the given BFT consensus protocol to include two additional functions.
The function \consensusValidate takes a block and an ID (including the set~$\committee$) and returns \vTrue if and only if the block is the result of those nodes, with a correct ratio above~$\consensusThreshold$, running $\consensusStep$. 
(In practice this can be implemented with cryptographic signatures.) 

Finally, the protocol supports \emph{forensics}, that is, it allows identifying misbehavior of the nodes~\cite{sheng2021bft, civit2021polygraph}. 
Specifically, if the number of Byzantine nodes is larger than the bound, they can cause an agreement violation. 
In such a case, the correct nodes can publish data allowing them to identify misbehaving parties.
To allow for enough time to run forensics before nodes in an active committee can unstake, we require that~${\deltaActive > 3\deltaPrimaryWrite}$.


        \subsection{Goal} \label{sec:model:goal}

The goal of the system is to form a tethered blockchain, denoted by~$\ledger$.
Each node~$i$ has a time-varying input value for each position of~$\ledger$.
For a height~$k$ (distance from the root) and time~$t$, it is denoted by~$\inputFuncITK{i}{t}{k}$.
We do not consider the content of~$\ledger$, which could serve as a decentralized sequencer, oracle, etc.\ (see~\secQuickRef{sec:related}). 
We only consider its blocks, which are maintained by the nodes as local vectors. 
If a node stores a block in a position~$k$ in its local vector we say it \emph{logs} this block. 
Note that this is a chain, not a single-instance consensus as in Section~\ref{sec:model:consensus}. 
The tethered blockchain should achieve the following 3 properties (referring to tethered-chain blocks and heights).

We use the standard Agreement property. 

\begin{description}
\item[Agreement] If a correct node~$i$ logs a block~$b$ in position~$k$ at time~$t$ and a correct node~$j$ logs block $b'$ in position~$k$ at time~$t'$ (maybe~$i=j$ and maybe~$t=t'$) then~$b=b'$. 
\end{description}

Agreement implies instant finality:
Once a value is logged, it will not be overturned.

For validity, we require each decision to be based on fresh inputs. 
This embodies the interactive nature of the protocol, where each block is based on the previous one. 
It also prevents batching outside block limits, that is, the nodes cannot agree in a single consensus iteration on multiple blocks. 

\begin{description}
\item[No-Batching validity] Let~$t'$ be the time of the first decision at height~$k$ (i.e., at least one node decides at height~$k$ at time~$t'$, and no node decides at height~$k$ before~$t'$).
If
(1) all input functions for height~$k+1$ return value~$v$ after~$t'$ (i.e., $\forall i, t'' > t': \inputFuncITK{i}{t''}{k+1} = v$);
(2) all nodes are correct;
and
(3) there are no configuration changes;
then no node decides on a value different from~$v$ at height~$k + 1$.
\end{description}

Finally, we require rapid progress after the protocol stabilizes at some time after~$\gst$. 

\begin{description}
\item[Progress] 
There exists a protocol stabilization bound~$\progressStabilizationTime$ such that for all executions, for all~${t > \gst + \progressStabilizationTime}$, if the latest decided value by any node at~$t$ is~$k$, then by~$t + \deltaConsensus + \deltaPropagation$ all correct nodes decide on~${k + 1}$
\end{description}


When it is not clear from context, we refer to the blockchain properties as blockchain-agreement etc., and the consensus properties as consensus-agreement etc.


\section{Bounded Primary-Chain Access is Necessary}\label{sec:impossibility}

We now show that Aegis is optimal in the following sense: There is no solution to the problem without synchronous access to the primary ledger. 
Formally, assume for the rest of this section that instead of synchronous access to the primary ledger, there are no guarantees before~$\gst$. 
That is, a write order to the primary at time~$t$ is only guaranteed to be completed by time~${\max\{t, \gst\} + \deltaPrimaryWrite}$.
Note that all messages ever sent and all primary-writes ever completed are visible to newly joining nodes.

In addition, for the sake of simplicity, in the rest of the section, we assume that solutions can only log one bit, either ~0 or~1.
Our results, trivially transfer to the case where more values are possible.


        \subsection{Naive Attempts}

For grasping the problem at hand, we suggest a few naive attempts which fail to achieve the desired properties.

One possible approach for a solution is for each node to submit to the primary ledger a proposal for each height with incrementing order. 
The first value at each height is the new bit, so all nodes log the same values. 
Only this extends at most at a rate of $\deltaPrimaryWrite$ even after $\gst$, since this is the time (bound) to find the block at each height (the time to write a block to the ledger);
and doesn't satisfy the Progress property. 

Another possibility is for nodes to submit proposals for more than one specific height to every primary block, and then nodes would pick the first proposal for each height.
Thus, allowing nodes to log proposals at a higher but constant rate.
This may satisfy progress, but doesn't satisfy No-Batching Validity, as nodes have to propose several blocks at once, and proposals may contain transactions which depend on previous ones to be accepted.

Another option is to switch committees at set-length epochs, say, 1000 blocks (assuming $\deltaActive$ is large enough). 
Before the end of an epoch, the committee writes back the last value, and the next committee picks up from that point. 
Again, this does not fulfill Progress property, since during the handoff the chain must stop for about~$O(\deltaPrimaryWrite)$ until the previous-epoch value is written.


        \subsection{Impossibility Proof} 

First, before we show the impossibility, we show any solution which satisfies No-Batching Validity must allow no batching.

\begin{lemma}\label{lem:noBatching}
Let~$\alg$ be a protocol, and let~$\sigma$ be an execution of~$\alg$ where all nodes are correct.
Let~$t_k$ be the time when the first node decides on height~$k$, and let~$t_{k+1}$ be the time when the first node decides on height~$k+1$.
If~$\alg$ satisfies No-Batching Validity, then~$t_{k+1} > t_k$.
\end{lemma}

\begin{proof}
Let~$\alg$ be a protocol that satisfies No-Batching Validity, and let~$\sigma$ be an execution of the protocol where all nodes are correct.
Let~$t_k$ be the time when the first node decides on height~$k$, and let~$t_{k+1}$ be the time when the first node decides on height~$k+1$.
Assume the value decided for height~$k + 1$ is~$v$.

Assume towards contradiction that~$t_{k+1} \le t_k$.
Consider an execution~$\sigma'$ identical to~$\sigma$ up to~$t_{k+1}$, and afterward, all input functions for height~$k+1$ and times~${t'' > t_k}$ return~${v' \neq v}$, all nodes are correct, and there are no configuration changes.
By No-Batching Validity, no node decides on a value different from~$v'$ at height~$k + 1$.
But~$v \neq v'$ by definition, so this contradicts the assumption that~$t_{k+1} \le t_k$.
\end{proof}

We are now ready to show that no solution is possible without synchronous access to the primary ledger.

\begin{theorem}
No protocol implements a tethered chain with partially synchronous ledger writing, when facing Byzantine behavior of inactive (unstaked) nodes.
\end{theorem}

\begin{proof}
Assume towards contradiction that there is a protocol~$\alg$ satisfying Agreement, No-Batching Validity, and Progress.

The proof uses a set of executions and indistinguishability arguments. 
Figure~\ref{fig:impossibility} illustrates the executions. 

Let~$\sigma$ be an execution of~$\alg$ where all nodes are correct.
Let~$\gst$ be the time from which network communication becomes stable in the execution~$\sigma$.
As in the Progress property, the protocol starts progressing rapidly after time~$\gst + \progressStabilizationTime$.
Denote by~$k$ the largest height decided by any node by that time. 
Define $t^{\textit{divergence}}$ as the earliest time any node logs height~$k+1$.
Since no node can log height~$k + 1$ before completing height~$k$~(Lemma~\ref{lem:noBatching}), we have $t^{\textit{divergence}} > \gst + \progressStabilizationTime$.

We now define two executions, $\sigma^0$ and $\sigma^1$, which follow $\sigma$ exactly until $t^{\textit{divergence}}$ but diverge afterward.
In $\sigma^0$, all input functions for height~$k+2$ return~0 after $t^{\textit{divergence}}$.
In $\sigma^1$, they all return~1 after $t^{\textit{divergence}}$.
We assume that no committee changes occur after $t^{\textit{divergence}}$ in either execution, that all nodes remain correct and that both executions~$\gst$ is the same as in $\sigma$.
We also assume that the scheduler delivers all node-to-node messages within a single step and completes primary writes at the latest possible time, within~$\deltaPrimaryWrite$ steps of each write. 
By Progress, in both executions~$\sigma^0$ and~$\sigma^1$, at least one node decides on height~$k+2$ by time~$t^{\textit{decide}} = t^{\textit{divergence}} + \deltaConsensus + \deltaPropagation$.
By No-Batching Validity, in $\sigma^0$ one node logs~0, while in $\sigma^1$ one node (not necessarily the same one) logs~1.

Any write to the primary ledger between $t^{\textit{divergence}}$ and $t^{\textit{decide}}$ can take up to $\deltaPrimaryWrite$ steps, and we assume $\deltaPrimaryWrite$ is large enough compared to $\deltaConsensus + \deltaPropagation$
so that none of these writes occur by time $t^{\textit{decide}}$.

Next, define a third execution, $\sigma^{1'}$, which follows $\sigma^1$ until $t^{\textit{decide}}$ and then diverges.
All nodes in the old committee issue unstake transactions at time~${t^{\textit{decide}}}$ and a disjoint set of nodes stakes in to form a new committee at time~${t^{\textit{decide}} + \deltaActive}$.
The scheduler delivers the unstaking and staking transactions as soon as possible.
Set $\gst' = t^{\textit{decide}} + \deltaActive$ for the execution~$\sigma^{1'}$.
The scheduler delays all messages from the old committee to the new committee and all message writes until the latest possible time~$\gst' + \deltaPropagation$, and $\gst' + \deltaPrimaryWrite$ respectively.
By Progress, the new committee, now active, logs a value at height~$k+2$ by some time~$t^{\textit{decide}'} > t^{\textit{decide}} + \deltaActive$;
by Agreement, that value is~1.

Finally, we construct $\sigma^x$ by combining the prefix of $\sigma^0$ up to $t^{\textit{decide}}$ with the reconfiguration at time~$T$ that occurs in $\sigma^{1'}$.
In $\sigma^x$, the old committee has already logged~0 for height~$k+2$ by $t^0$.
We set the stabilization time of $\sigma^x$ to $\gst^x = t^{\textit{decide}'}$, and let the scheduler defer all old-committee messages and ledger writes which occur after~$t^{\textit{decide}}$ until $\gst^x$, except for the following.
Since by $T + \deltaActive$ ($< \gst^x$) the old committee is no longer active, it can behave arbitrarily without being slashed.
At that point, it produces the same messages and write orders as it did in $\sigma^{1'}$, which the scheduler delivers at the corresponding times they appeared in $\sigma^{1'}$.
Hence, the new committee in $\sigma^x$ sees exactly the events up until~$t^{\textit{decide}'}$ as in $\sigma^{1'}$, so it cannot distinguish $\sigma^x$ from $\sigma^{1'}$ and therefore logs~1 at height~$k+2$ at time~$t^{\textit{decide}'}$.
This contradicts Agreement, because the same height~$k+2$ is logged as both 0 and 1 in a single execution ($\sigma^x$).

Therefore, no protocol can achieve a tethered chain under partially synchronous primary chain writing when inactive nodes may behave arbitrarily.
\end{proof}

\begin{figure}[t]
    \centering
    \resizebox{\columnwidth}{!}{
    \begin{tikzpicture}[
        timeline/.style={
            ->,
            thick,
            draw=black!70
        },
        label/.style={font=\Large},
        timepoint/.style={
            circle,
            fill=black,
            inner sep=1pt
        }
    ]
    
    \def\verticalspacing{1.5}
    \def\timelinestart{0}
    \def\timelineend{9}
    
    \begin{scope}[yshift=3*\verticalspacing cm]
        \node[label, scale=1.3] at (-0.5,0) {$\mathbf{\sigma^0}$};
        \draw[timeline] (\timelinestart,0) -- (\timelineend,0);
        \node[timepoint] at (0,0) {};
        \node[timepoint] at (2.5,0) {};
        \node[red, label] at (0,0.7) {$\gst + \progressStabilizationTime$};
        \draw[red, dashed] (0,0.5) -- (0,0);
        \node[label] at (2.5,0.5) {$C_{old}$ logs \textcolor{blue}{0}};
    \end{scope}
    
    \begin{scope}[yshift=2*\verticalspacing cm]
        \node[label, scale=1.3] at (-0.5,0) {$\mathbf{\sigma^1}$};
        \draw[timeline] (\timelinestart,0) -- (\timelineend,0);
        \node[timepoint] at (0,0) {};
        \node[timepoint] at (2.5,0) {};
        \node[red, label] at (0,0.7) {$\gst + \progressStabilizationTime$};
        \draw[red, dashed] (0,0.5) -- (0,0);
        \node[label] at (2.5,0.5) {$C_{old}$ logs \textcolor{green!60!black}{1}};
    \end{scope}
    
    \begin{scope}[yshift=\verticalspacing cm]
        \node[label, scale=1.3] at (-0.5,0) {$\mathbf{\sigma^{1'}}$};
        \draw[timeline] (\timelinestart,0) -- (\timelineend,0);
        \node[timepoint] at (0,0) {};
        \node[timepoint] at (2.5,0) {};
        \node[timepoint] at (5,0) {};
        \node[timepoint] at (7.5,0) {};
        \node[label] at (2.5,0.5) {$C_{old}$ logs \textcolor{green!60!black}{1}};
        \node[label] at (7.5,0.5) {$C_{new}$ logs \textcolor{green!60!black}{1}};
        \node[label] at (7.5,1.0) {\textit{(agreement)}};
        \node[red, label] at (5,0.7) {$\gst'$};
        \draw[red, dashed] (5,0.5) -- (5,0);
    \end{scope}
    
    \begin{scope}[yshift=0cm]
        \node[label, scale=1.3] at (-0.5,0) {$\mathbf{\sigma^x}$};
        \draw[timeline] (\timelinestart,0) -- (\timelineend,0);
        \node[timepoint] at (0,0) {};
        \node[timepoint] at (2.5,0) {};
        \node[timepoint] at (5,0) {};
        \node[timepoint] at (7.5,0) {};
        
        \draw [decoration={brace,mirror,raise=3pt},decorate] (0.2,-0.05) -- (2.3,-0.05) 
            node [below=5pt,pos=0.5] {$C_{old}$};
        \draw [decoration={brace,mirror,raise=3pt},decorate] (2.7,-0.05) -- (4.8,-0.05)
            node [below=5pt,pos=0.5] {$C_{old}$ unstaking};
        \draw [decoration={brace,mirror,raise=3pt},decorate] (5.2,-0.05) -- (7.3,-0.05)
            node [below=5pt,pos=0.5] {$C_{new}$};
        
        \draw[dotted] (0,0) -- (0,-0.8);
        \node[label] at (0,-1) {$\mathbf{t^{\textit{divergence}}}$};
        \draw[dotted] (2.5,0) -- (2.5,-0.8);
        \node[label] at (2.5,-1) {$\mathbf{t^{\textit{decide}}}$};
        \draw[dotted] (5,0) -- (5,-0.8);
        \node[label] at (5,-1) {$\mathbf{t^{\textit{decide}} + \deltaActive}$};
        \draw[dotted] (7.5,0) -- (7.5,-0.8);
        \node[label] at (7.5,-1) {$\mathbf{t^{\textit{decide}'}}$};
        
        \node[label] at (2.5,0.5) {$C_{old}$ logs \textcolor{blue}{0}};
        \node[label] at (7.5,0.5) {$C_{new}$ logs \textcolor{green!60!black}{1}};
        \node[label] at (7.5,1.0) {\textit{(indistinguishability from $\sigma^{1'}$)}};
        \node[red, label] at (9.5,0.5) {$\gst^x$};
        \draw[red, dashed] (9.5,0.3) -- (\timelineend,0);
    \end{scope}
    
    \end{tikzpicture}
    }
    \Description{Illustration of impossibility proof.}
    \caption{Impossibility proof.}
    \label{fig:impossibility}
\end{figure}


\section{Aegis} \label{sec:aegis}

After establishing that no secure tethered chain can be achieved in a partially synchronous setting, we present Aegis, a tethered-chain protocol that takes advantage of synchronous access to a primary ledger.
We begin with an overview of Aegis~(\secQuickRef{sec:aegis:overview}).
Then we describe its components in detail: Aegis's blockchain structure~(\secQuickRef{sec:aegis:dataStructure}), its primary-chain contract~(\secQuickRef{sec:aegis:contract}), and the peer-to-peer protocol that Aegis nodes execute to form the blockchain~(\secQuickRef{sec:aegis:p2p}).
Finally, we discuss the execution of the forensics procedure to identify and penalize misbehaving nodes~(\secQuickRef{sec:aegis:forensics}).


\subsection{Protocol Overview} \label{sec:aegis:overview}



\paragraph{Aegis Blockchain.}
The Aegis blockchain advances by blocks generated by committees.
Committees use a Byzantine fault-tolerant consensus protocol designed for partial synchrony.
Figure~\ref{fig:aegisOperation} illustrates Aegis's blockchain topology and its key mechanisms.
Each Aegis block contains hash references to its parent block and to a primary-chain block.
The parent reference establishes the chain's history, while the primary-chain reference defines the committee responsible for producing the next block.
For instance, Aegis block $A1$ references primary block $P2$, so $P2$'s state determines the committee for block $A2$.
Similarly, $A2$'s reference to $P2$ means $P2$'s state determines the committee for block $A3$.

\begin{figure}[t]
    \centering
    \resizebox{\columnwidth}{!}{
        \begin{tikzpicture}[
    every node/.style={rectangle, draw,minimum width=0.3cm, minimum height=.3cm},
    >=Stealth,
    bend angle=45, 
    scale=1.0, 
    transform shape=true,
]

\node[above, draw=none, anchor=east] at (9.75,2.5) {Aegis chain};
\node[above, draw=none, anchor=east] at (9.75,-0.5) {Primary chain};

\node[draw=none, fill=none] (A0) at (-0.5,2) {...};
\node[fill=magenta!30] (A1) at (1.75,2) {$A1$};
\node[fill=green!30] (A2) at (3,2) {$A2$};
\node[fill=green!30] (A3) at (4.25,2) {$A3$};
\node[fill=cyan!30] (A4) at (8.75,2) {$A4$};
\node[draw=none, fill=none] (Ar) at (9.5,2) {...};
\node[draw=none, fill=none] (Adel) at (6.5,2) {$//$};

\node[draw=none, fill=none] (P0) at (-0.5,0) {...};
\node[fill=magenta!30] (P1) at (0.5,0) {$P1$};
\node[fill=green!30] (P2) at (1.5,0) {$P2$};
\node[fill=white!30] (P3) at (4.75,0) {$P3$};
\node[fill=cyan!30] (P4) at (7.5,0) {$P4$};
\node[fill=white!30] (P5) at (8.5,0) {$P5$};
\node[draw=none, fill=none] (Pr) at (9.5,0) {...};
\node[draw=none, fill=none] (Pdel) at (6.5,0) {$//$};

\node[draw=none, fill=none] (period1delta) at (1.35,-1) {\Large $\deltaActive$};
\node[draw=none, fill=none, anchor=east] (period1start) at (0.5,-1) {Com 1};
\node[draw=none, fill=none] (period1end) at (5.15,-1) {};

\node[draw=none, fill=none] (period2delta) at (2.6,-1.5) {\Large $\deltaActive$};
\node[draw=none, fill=none, anchor=east] (period2start) at (1.5,-1.5) {Com 2};
\node[draw=none, fill=none] (period2end) at (6.4,-1.5) {};

\node[draw=none, fill=none] (period3delta) at (5.5,-2) {\Large $\deltaActive$};
\node[draw=none, fill=none, anchor=east] (period3start) at (4.75,-2) {No resets};
\node[draw=none, fill=none] (period3end) at (7.25,-2) {};
\node[draw=none, fill=none] (period3jump) at (6.5,-2) {$//$};

\node[draw=none, fill=none] (period4delta) at (8.35,-1) {\Large $\deltaActive$};
\node[draw=none, fill=none, anchor=east] (period4start) at (7.5,-1) {Com 4};
\node[draw=none, fill=none] (period4end) at (9.5,-1) {...};

\draw[->, line width=0.8pt, dashed] (A1) -- (P1) node[midway, left, draw=none, scale=0.8, font=\Large, yshift=0.1cm] {reset};
\draw[->, line width=0.8pt,] (A1) -- (P2) node[midway, right, draw=none, scale=0.8, font=\Large, yshift=0.1cm] {ref};
\draw[->, line width=0.8pt,] (A2) -- (P2) node[midway, right, draw=none, scale=0.8, font=\Large, xshift=0.1cm, yshift=0.1cm] {ref};
\draw[->, line width=0.8pt,] (A3) -- (P2) node[midway, right, draw=none, scale=0.8, font=\Large, xshift=0.2cm, yshift=0.1cm] {ref};

\draw[->, line width=0.8pt, dashed] (A4) -- (P4) node[midway, left, draw=none, scale=0.8, font=\Large, yshift=0.1cm] {reset};
\draw[->, line width=0.8pt,] (A4) -- (P5) node[midway, right, draw=none, scale=0.8, font=\Large, yshift=0.1cm] {ref};

\draw[->,line width=0.8pt,bend right=20] (P3) to node[midway, right, draw=none, scale=0.8,font=\Large, xshift=0.1cm, yshift=0.1cm] {chk} (A3);

\draw[<-] (A1) edge (A2) 
          (A2) edge (A3) 
          (A3) edge (A4) 
          (A4) edge (Adel)
          (A4) edge (Ar)
          (A0) edge (A1);

\draw[<-] (P1) edge (P2) 
          (P2) edge (P3)
          (P3) edge (Pdel)
          (P4) edge (P5)
          (P5) edge (Pr)
          (P0) edge (P1);

\draw[-|]{}
(period1delta) edge  (period1start)
(period1delta) edge  (period1end)
(period2delta) edge  (period2start)
(period2delta) edge  (period2end)
(period3jump) edge  (period3end)
(period3delta) edge  (period3start)
(period4delta) edge  (period4start);

\draw[-]{}
(Pdel) edge (P4)
(Adel) edge (A4)
(period3delta) edge  (period3jump)
(period4delta) edge  (period4end);

\end{tikzpicture}
    }
    \caption{Aegis's toplogy and key mechanisms.}
    \label{fig:aegisOperation}
\end{figure}

\paragraph{Committee Selection.}
A primary block defines a committee from nodes with locked stake weighted by their stake amount.
\begin{note}
    Committees can be selected in various ways to utilize the consensus protocol more efficiently.
    For example, committees can include only nodes exceeding a minimum stake threshold or a fixed number of highest-stake nodes.
\end{note}
\begin{note}
    In staking transactions, nodes must include contact information (e.g., IP address) for message routing; we omit this detail for simplicity.
\end{note}

Aegis handles committee transitions without explicit handovers through references to primary blocks.
As shown in Figure~\ref{fig:aegisOperation}, committee 1 initializes committee 2 as block~$A2$ references primary block~$P2$.
Each committee remains active for a bounded period~($\deltaActive$) after the referenced primary block.
During this period, members cannot withdraw their stake.
This provides the accountability necessary for \emph{safety}, meaning that the chain is agreed upon by all correct nodes and cannot be altered retroactively.

\paragraph{Primary-Chain Smart Contract.}
In addition to primary block references, Aegis employs two complementary mechanisms through its primary-chain smart contract: checkpoints and resets.

The checkpoint mechanism anchors the Aegis state to the primary chain, creating tamper-proof reference points that (1)~help new nodes determine the correct chain history and (2)~signal the smart contract that the Aegis chain is currently progressing.
Active committees submit a checkpoint before the end of their activity period.
For instance, a node in committee 1 submits a checkpoint of block~$A3$ to~$P3$.
The smart contract verifies that the block's committee is active during submission, that the block results from valid consensus, and that it is a descendant of the last checkpointed block.

The reset mechanism helps Aegis recover from communication failures.
If network delays prevent a committee from producing blocks before its activity period expires, no new committee will be selected.
In this case, any node can issue a reset through the primary-chain contract.
A reset establishes a new committee based on the current primary-chain state, which can then resume extending the Aegis chain from the last checkpointed block.
The committee produces a block that includes a hash reference to the primary-chain block containing the reset.
For example, after committee 2's activity period ended, a reset was submitted to block~$P4$, creating a committee that produced a block referencing~$P4$.

\paragraph{Node Protocol.}
Aegis nodes execute a distributed protocol that utilizes the primary chain, the smart contract, and its checkpoint and reset mechanisms.
Each node maintains a local view of the Aegis blockchain, coordinated with other nodes via peer-to-peer communication.

Nodes follow the latest state of the primary chain: they reference the latest block when producing an Aegis block and monitor the smart contract for checkpoints and resets.
Nodes have access to previous primary-chain blocks when needed and can post checkpoints and resets to the contract.
Aegis relies on these actions taking at most~$\deltaPrimaryWrite$ time steps.

The protocol operates in three sequential phases at each time step.
First, nodes read the primary chain to identify the latest checkpointed block and determine the active committee.
Then, they log all Aegis blocks created by active committees since the last checkpoint.
Finally, nodes run a Byzantine fault tolerant consensus protocol to extend the Aegis chain.

\paragraph{Timing Constraints.}
There are two challenges that might arise when a committee approaches the end of its activity period.
First, a committee might benignly fail to checkpoint a block or forward it to the next committee.
This would result in the next committee creating a conflicting block that does not reference the previous committee's block, violating agreement.
Second, committee nodes might maliciously produce conflicting blocks right before their period ends, leaving insufficient time for forensics and penalties, removing incentives to behave properly.

Aegis addresses these challenges with strict timing constraints to ensure sufficient time for checkpoints and forensics.
Near the end of the activity period, nodes stop producing new blocks.
Instead, they checkpoint their last generated block to ensure proper handover, then wait for their tenure to end.
This period covers the last~$3\deltaPrimaryWrite$ steps of each committee's activity period.

Another challenge is premature resets that could create concurrent committees and conflicting blocks.
To prevent this, the smart contract accepts resets only when the latest checkpoint or reset is at least~$\deltaActive$ time steps old.
For example, since~$P1$ contains a reset, no other reset is accepted during committee 1's activity period, lasting~$\deltaActive$ steps from~$P1$'s creation.
Additionally, a checkpoint to~$P3$ restarts this restriction for~$\deltaActive$ steps following the checkpoint.

Next, we describe the components of Aegis in detail, starting with the data structure of the Aegis blockchain.


    \subsection{Chain Structure} \label{sec:aegis:dataStructure}

Aegis's data structure includes the Aegis chain and relevant updates in the primary chain. 

All nodes start with a single Genesis Aegis block.
All Aegis blocks~$b$ (apart from Genesis) have a parent Aegis block~$\parent{b}$ and an Aegis-to-Primary reference to a primary block~$\LOneRef{b}$.
(For genesis only:~$\parent{\bGenesis} = \bot$ and~$\LOneRef{\bGenesis} = \bot$.)

Each Aegis block~$b$ should be generated by consensus among the committee it specifies. 
The committee is specified in one of two ways. 
First, an aegis block can have a reference to a reset in a primary block, denoted~$\LOneResetRef{b}$. 
If~$\LOneResetRef{b} \neq \bot$, the committee is the one specified by the reset.
Otherwise, the committee is the one specified in the primary block referenced by~$b$'s parent, i.e., $\LOneRef{\parent{b}}$. 
The function \blockMembership{B} returns the committee specified by a primary block~$B$. 
The block should thus be valid according to the \consensusValidate function with the appropriate committee. 
The consensus instance ID is the pair consisting of its parent and reset references. 

The function \isValid (Algorithm~\ref{alg:isValid}) validates all this---Note that this is only a data structure validation that does not take into account committee activity. 
The protocol ignores blocks that are invalid as checked by $\isValid$. 

\begin{algorithm}[t] 
    \SetAlgoNoLine
    \SetAlgoNoEnd 
    \DontPrintSemicolon 
    \caption{Valid block predicate} 
    \label{alg:isValid}
    \fontsize{7.5}{8.5}\selectfont 
    \KwFunction({\isValid($b, \primaryLedger, \mathcal{B}$)}){ 
        \lIf{$b$ is Genesis}{return \vTrue}
        \lIf{$\parent{b} \not\in \mathcal{B}$}{return $\vFalse$}
        \lIf{$\LOneRef{b} \not\in \primaryLedger$}{return $\vFalse$} 
        \lIf{$\LOneRef{b}$ is not a descendant of~$\LOneRef{\parent{b}}$}{return $\vFalse$} 
        \lIf(\algoComment{Recursively validate parents}){$\lnot \isValid(\parent{b}, \primaryLedger, \mathcal{B})$}{return $\vFalse$} 
        \If{$\LOneResetRef{b} \neq \bot$}{
            \lIf{$\LOneResetRef{b} \not\in \primaryLedger$}{return $\vFalse$}
            \lIf{$\LOneResetRef{b}$ is not a descendant of~$\LOneRef{\parent{b}}$}{return $\vFalse$} \label{alg:isValid:resetAndParentRefOrder}
            $\committee \gets \blockMembership{ \LOneResetRef{b}$ } \; 
        } \Else {
            $\committee \gets \blockMembership{ \LOneRef{b}$ } \; 
        }
        \lIf{$\lnot\consensusValidate[b][\bigl( \blockParent{b}, \LOneResetRef{b} \bigr)][\committee]$}{return $\vFalse$} \label{alg:isValid:consensusValidate}
        return \vTrue
    }
\end{algorithm}


\begin{algorithm}[t] 
    \SetAlgoNoLine 
    \SetAlgoNoEnd 
    \DontPrintSemicolon 
    \caption{Aegis primary-ledger contract.} 
    \label{alg:contract} 
    \fontsize{7.5}{8.5}\selectfont 
    \KwOn(Receive at time~$t$ entry~$e$ referencing Aegis block~$b$ and its parent~$b'$){
        \If{$e$ is a reset}{
            assert no entries in blocks later than~$t - \deltaActive$ \label{alg:contract:resetOnlyAfterStale} {\commentColor\KwIttayComment*{First entry or previous stale}}  
        } \Else(\algoComment{checkpoint}) {
            assert $\LTwoRef{e} = b$ \;
            assert~$\blockParent{b} = b'$ \; 
            assert $\LOneRef{b} \in \primaryLedger$ \;  
            \If(\algoComment{$b$'s committee specified by reset}){$\LOneResetRef{b} \neq \bot$} {
                assert~$\LOneResetRef{b} \in \primaryLedger$ \; 
                $\EReset \gets$ primary block containing reset entry at $\LOneResetRef{b}$ \label{alg:contract:checkpointAfterReset} {\commentColor\KwIttayComment*{Primary-chain block with reset entry}}
                $\mathcal{N} \gets \blockMembership{\EReset}$ \; 
                $t_0 \gets \blockTime{\EReset}$ \;
            } \Else(\algoComment{$b$'s committee specified by previous block}) {
                $\EReset \gets \bot$ \; 
                assert~$\LOneRef{b'} \in \primaryLedger$ {\commentColor\KwIttayComment*{Parent's reference exists}}
                $\mathcal{N} \gets$ \blockMembership{\primaryLedger(\LOneRef{b'})} {\commentColor\KwIttayComment*{Specified by parent}}
                $t_0 \gets \blockTime{\primaryLedger(\LOneRef{b'})}$ \; 
            }
            assert~$t < t_0 + \deltaActive$ \label{alg:contract:timeValidate} \; 
            assert~$\consensusValidate[b][\bigl( b', \EReset \bigr)][\committee]$ \label{alg:contract:consensusValidate} \; 
            assert $\blockHeight{b} > \blockHeight{\LTwoRef{\textup{previous checkpoint}}}$ \;
        }
        accept~$e$ {\commentColor\KwIttayComment*{All asserts passed, register checkpoint/reset}}
    }
    \end{algorithm}

Both Aegis blocks and primary blocks have associated generation times.
For any block~$b$, we denote its generation time by~$\blockTime{b}$, representing the timestamp when the block was created.
In the protocol, only the timestamps of primary blocks are used to determine committee activity periods and to enforce timing constraints to avoid potential vulnerabilities via timestamp manipulation of Aegis blocks.

        \subsection{Primary-Chain Contract} \label{sec:aegis:contract}

The primary-chain contract (Algorithm~\ref{alg:contract}) allows checkpointing Aegis blocks and initiating resets. 
It receives \emph{entries} submitted by Aegis nodes asking to register either a checkpoint or a reset.
In each block the primary-chain contract may receive one entry at most, and ignores any further submissions.

The contract registers a reset if there are no entries in the last~$\deltaActive$ steps, indicating one is indeed necessary; no other checks are required. 

For a checkpoint, the contract receives an Aegis block to checkpoint along with its parent. 
It verifies that the block correctly references a primary block earlier than this checkpoint. 
Similarly, if the Aegis block has a reset reference, the contract verifies it points to a previous primary block. 
It also verifies that the block's Aegis parent references a previous primary block. 
The contract then verifies that the committee that created the block is still active by comparing the current time (known to the contract) to the time of the committee's referenced block. 
It validates the block is the result of the committee's consensus by using the function \consensusValidate.
Finally, it verifies that the block is higher (farther from the Aegis Genesis) than the last checkpointed block.
Having passed all these checks, the contract accepts the checkpoint and registers it.

\begin{note}
For a practical implementation, to reduce primary-chain overhead, the checkpoint can include only the hash of the Aegis block~$b$, and the following elements with proof they are included in the block: reference from~$b$ to the primary block, reference from $b$ to a reset (could be~$\bot$), reference from~$b$ to its parent~$b'$. 
Similarly, for~$b'$ we have its reference to a primary block. 
\end{note}

\begin{note}
Theoretically, neglecting slashing enforcement, a protocol-agnostic append-only log would suffice instead of a smart contract, with the nodes reading it and locally applying the contract logic. 
\end{note}


\newcounter{AlgoAegisCounter} 
\setcounter{AlgoAegisCounter}{\value{algocf}} 
\refstepcounter{AlgoAegisCounter}\label{alg:aegis} 

\renewcommand{\thealgocf}{\arabic{algocf}a} 
\newcounter{savedlineno} 

\begin{algorithm}[t]
    \SetAlgoNoLine 
    \SetAlgoNoEnd 
    \DontPrintSemicolon 
    \caption{Aegis algorithm for node~$i$. (part 1/2)} 
    \label{alg:aegisA} 
    \fontsize{7.5}{8.5}\selectfont  
    \nonl Initialization:  \;
    $\blockSet_i$: A mapping, initially $\blockSet_i(\blockID{\bGenesis}) = \bGenesis$, $\forall \textit{id} \neq \blockID{\bGenesis}: \blockSet_i(\textit{id}) = \bot$. \;
    $\ledger_i$: A vector, initially $\ledger_i(0) = \bGenesis$, $\forall k > 0: \ledger_i(k) = \bot$ \label{alg:aegis:logGenesis} \;
    \BlankLine
    \BlankLine
    \nonl Execution at step~$t$: (return immediately if a referenced block is missing) \;
    $\blockSet_i \gets \blockSet_i \cup \textup{blocks received}$ \label{alg:aegis:readStatesFirst} \;
    $\primaryLedger_i \gets \primaryLedger_i \| \textup{Primary chain extension receieved}$ \label{alg:aegis:readStatesLast} \algoComment{No conflicts by assumption} 
    \BlankLine
    \BlankLine
    $\ELast \gets $ primary block containing the last contract entry in $\primaryLedger_i$ or $\bot$ if none \; 
    \If{entry in \ELast is a reset \label{alg:aegis:isResetValid}}{
        \If{$t < \blockTime{\ELast} + \deltaActive - \deltaPrimaryWrite$ \label{alg:aegis:resetEnoughTimeCheck}}{
            $\BCheckpoint \gets $ primary block containing the latest checkpoint entry, $\bot$ if none \label{alg:aegis:checkpointBeforeReset}\label{alg:aegis:lastIsResetFirst} \;
            $\committee \gets \blockMembership{\ELast}$ \label{alg:aegis:resetCommittee} \; 
            $t_0 \gets \blockTime{\ELast}$ \label{alg:aegis:noteResetTime} \label{alg:aegis:lastIsResetLast} \; 
        } 
        \ElseIf(\algoComment{Previous reset stale}){$t > \blockTime{\ELast} + \deltaActive$}{ 
            issue reset and return \label{alg:aegis:resetAfterReset}
        }
        \Else(\algoComment{Too close to timeout}) {
            return
        }
    } \ElseIf{entry in \ELast is a checkpoint \label{alg:aegis:lastIsCheckpointFirst}}  {
        $\BCheckpoint \gets \ELast$ \;
        $\bCheckpointed \gets \blockSet_i(\LTwoRef{\BCheckpoint})$ \; {\commentColor \KwIttayComment*{The checkpointed Aegis block}}
        $\committee \gets \blockMembership{ \primaryLedger (\LOneRef{\bCheckpointed}) }$ \label{alg:aegis:committeeFromCheckpoint} {\commentColor \KwIttayComment*{Committee from checkpointed Aegis block}}
        $t_0 \gets \blockTime{ \primaryLedger (\LOneRef{\bCheckpointed}) }$ \label{alg:aegis:noteCheckpointTime} \label{alg:aegis:lastIsCheckpointLast} {\commentColor \KwIttayComment*{Its generation time}}
    } \Else(\algoComment{No block found}) { 
        issue reset and return {\commentColor \KwIttayComment*{Initialize}}
    }


\If(\algoComment{First non-genesis block, committee by reset}\label{alg:aegis:blockOneFirst}){$\BCheckpoint = \bot$}{
    $k \gets 1$ \; 
    $\bCheckpointed \gets \bGenesis$ \; 
    $b \gets \bGenesis$ \label{alg:aegis:blockOneLast} \; 
} \Else {
    $\bCheckpointed \gets \blockSet_i(\LTwoRef{\BCheckpoint})$  {\commentColor\KwIttayComment*{We have now found the last checkpoint}}
    \ForAll(\algoComment{Log all blocks up to checkpoint, inclusive}){$k$ from~$|\ledger| + 1$ to~$\blockHeight{\bCheckpointed}$ \label{alg:aegis:logToCheckpointedFirst}} {
        \If{$\exists b \in \blockSet_i$ s.t.\ $b$ is an ancestor of \bCheckpointed and $\blockHeight{b} = k$}{
            $\ledger_i(k) \gets b$ \label{alg:aegis:logFromCheckpoint} \label{alg:aegis:logToCheckpointedLast} \algoComment{Already verified via contract}
        } \Else {
            return \algoComment{We are missing an ancestor of the checkpoint}
        }
    } 
}

\setcounter{savedlineno}{\value{AlgoLine}} 
\end{algorithm}

\addtocounter{algocf}{-1} 
\renewcommand{\thealgocf}{\arabic{algocf}b} 

\begin{algorithm}[t]
\SetAlgoNoLine 
\SetAlgoNoEnd 
\DontPrintSemicolon 
\caption{Aegis algorithm for node~$i$. (part 2/2)}
\label{alg:aegisB}
\fontsize{7.5}{8.5}\selectfont  
\setcounter{AlgoLine}{\value{savedlineno}} 
    
$k \gets |\ledger|$ {\commentColor\KwIttayComment*{Equal to $\blockHeight{\bCheckpointed}$}} 
$b \gets \bCheckpointed$ \; 
\If(\algoComment{Committee for next block still active}){$t < t_0 + \deltaActive$ \label{alg:aegis:isPrevBlocksCommitteeActive}} {
    Choose $b' \in \blockSet_i$ s.t.\ $\isValid(b', \mathcal{N}, \primaryLedger, \blockSet_i)$ and $\blockParent{b'} = \bCheckpointed$, $\bot$ if none \label{alg:aegis:chooseBprimeBeforeLoop} \; 
    \While(\algoComment{Add blocks by active committees}){
    $b' \neq \bot$
    \label{alg:aegis:committeeActive}}{
        $b \gets b'$ \; 
        $\ledger(k) \gets b$ \label{alg:aegis:logByPrevConsensus} \;  
        $k \gets k + 1$ \; 
        $\mathcal{N} \gets \blockMembership{\LOneRef{b}}$ \label{alg:aegis:changeCommittee} \; 
        Choose $b' \in \blockSet_i$ s.t.\ $\isValid(b', \mathcal{N}, \primaryLedger_i, \blockSet_i)$ and $\blockParent{b'} = b$, $\bot$ in none \label{alg:aegis:chooseBprimeInLoop} \;
    }
}
\BlankLine
\If(\algoComment{Latest committee has enough time to extend}){$t < t_0 + \deltaActive - 3 \deltaPrimaryWrite$ \label{alg:aegis:enoughTimeToExtend}}{
    \If(\algoComment{No blocks since last checkpoint and following a reset}){$b = \bCheckpointed \land$ entry in $\ELast$ is a reset} {
        $b \gets \consensusStep[\inputFuncITK{i}{t}{|\ledger|}][\bigl( \ledger(|\ledger|-1), \ELast \bigr)][\mathcal{N}]$ \label{alg:aegis:runConsensusAfterReset} \; 
    } \Else(\algoComment{Last checkpointed or subsequent block's committee is active}) {
        $b \gets \consensusStep[\inputFuncITK{i}{t}{|\ledger|}][\bigl( \ledger(|\ledger|-1), \bot \bigr)][\mathcal{N}]$ \label{alg:aegis:runConsensus} \; 
    } 
    \lIf{$b \neq \bot$}{$\ledger(|\ledger|) \gets b$} \label{alg:aegis:logByCurrConsensus} 
}
\BlankLine
\If(\algoComment{Deadline to issue checkpoint}){$t = \blockTime{\ELast} + \deltaActive - 3 \deltaPrimaryWrite$ and $b \neq \bCheckpointed$
\label{alg:aegis:checkpointLastMinuteFirst}}{
    issue checkpoint for~$b$ and return \label{alg:aegis:checkpointLastMinute} \;
}
\If{$t > t_0 + \deltaActive$} {
    issue reset and return \label{alg:aegis:resetAfterCheckpoint} \;
}
{\commentColor\KwIttayComment*{No-op if $t_0 + \deltaActive - 3\deltaPrimaryWrite \leq t \leq t_0 + \deltaActive$}} \label{alg:aegis:giveUpTooLate} 
\end{algorithm}

\renewcommand{\thealgocf}{\arabic{algocf}}

        \subsection{Node Protocol} \label{sec:aegis:p2p}

We are now ready to present the protocol executed by each node (Algorithm~\ref{alg:aegis}). 
All nodes start with a single Aegis Genesis block (line~\ref{alg:aegis:logGenesis}) and in each time step proceed in three phases, as follows. 

            
            \subsubsection{Sync to Latest Checkpoint}

At the beginning of a step a node first reads the states of the primary and tethered blockchains (lines~\ref{alg:aegis:readStatesFirst}--\ref{alg:aegis:readStatesLast}). 
Then, it finds the latest checkpointed block as follows. 
If the latest primary-chain entry is a checkpoint, it takes this checkpoint; the committee for the next block to log is expected to be the one defined by the checkpointed block, i.e., according to the primary block it references (line~\ref{alg:aegis:committeeFromCheckpoint}). 
Otherwise, if the latest entry is a reset and the committee defined by the reset is still active (line~\ref{alg:aegis:isResetValid}), it takes the most recent checkpoint with the reset's committee for the subsequent block. 
If the reset's committee is inactive, the node issues a new reset and returns, waiting for it to take place. 
Finally, if there is no primary block entry then this is the first execution of the protocol, so the node issues a reset and returns, waiting for it to take place. 

If the only entry was a reset, namely, there are no checkpointed blocks, then this is the first block after Genesis (lines~\ref{alg:aegis:blockOneFirst}--\ref{alg:aegis:blockOneLast}). 
Otherwise, the node goes from the latest block it had confirmed in previous steps (maybe starting from Genesis), logging in its local Aegis chain all ancestors up to the checkpointed block (lines~\ref{alg:aegis:logToCheckpointedFirst}--\ref{alg:aegis:logToCheckpointedLast}). 
If the node is yet to receive any of those, the algorithm stops; the node will wait to receive them in subsequent steps. 


            \subsubsection{Sync Past Checkpoint} 

Having caught up to the latest checkpoint, the node proceeds to log Aegis blocks finalized by consensus that are yet to be checkpointed. 
This is only done if the latest checkpointed block's committee is still active (line~\ref{alg:aegis:isPrevBlocksCommitteeActive}), implying all subsequent committees are still active. 

For each block, it adds it to the ledger and updates the next block's committee (lines~\ref{alg:aegis:isPrevBlocksCommitteeActive}--\ref{alg:aegis:chooseBprimeInLoop}). 
This is done for all Aegis blocks that were generated with currently active committees. 
Recall that the committee of the first block after the checkpoint is defined by the primary block its parent points to (lines~\ref{alg:aegis:lastIsCheckpointFirst}--\ref{alg:aegis:lastIsCheckpointLast}), or by a reset block if it was produced that way (lines~\ref{alg:aegis:lastIsResetFirst}--\ref{alg:aegis:lastIsResetLast}). 

Once this is done, the node is up-to-date, having logged all previous blocks before the last checkpoint and after the checkpoint (with active committees). 

            \subsubsection{Extend the log} 

If the latest Aegis block specified committee has enough time remaining active, the node tries to extend the chain by taking a consensus protocol step. 
Specifically, there should be enough time to checkpoint a new block and complete the forensics procedure (line~\ref{alg:aegis:enoughTimeToExtend}). 
The consensus is identified by the parent and reset blocks (line~\ref{alg:aegis:runConsensusAfterReset} or~\ref{alg:aegis:runConsensus}). 
If the committee is no longer active, it issues a reset (line~\ref{alg:aegis:resetAfterCheckpoint}) 
to enable progress in future steps. 
Because the committee defined by the checkpoint is inactive, the contract will accept the reset. 

Aegis nodes issue a checkpoint before the committee expires~(line~\ref{alg:aegis:checkpointLastMinute}), which is sufficient for correctness. 
Note that it could be that just one correct node got the block (line~\ref{alg:aegis:runConsensusAfterReset} or~\ref{alg:aegis:runConsensus}) before the committee expires---the consensus protocol does not guarantee more. 
Then the perpetuation of this block relies entirely on this node issuing the checkpoint. 
But this is guaranteed since this node is correct. 

\begin{note}
The consensus ID for each Aegis block~$b$ includes both its ancestor~$\LOneRef{b}$ and its reset reference~$\LOneResetRef{b}$, the latter is~$\bot$ if the block's committee is specified by its ancestor. 
This guarantees that the committee could only start the consensus instance when it was active. 
Without the reset reference, a committee $C$ could be specified by a primary block, later become stale, create a block~$b$ while being inactive, then a reset reinstates the same committee, which is now active, allowing it to present~$b$ as the result of an active~committee. 
\end{note}


        \subsection{Forensics} \label{sec:aegis:forensics}

Aegis takes advantage of the consensus protocol's forensics support~\cite{sheng2021bft,civit2021polygraph} to penalize Byzantine nodes if their number is greater than its threshold, and they successfully violate agreement. 
The forensics procedure encompasses both the contract and the peer-to-peer protocol, we describe it only here for pseudocode readability. 

A correct node identifies a violation if it observes two conflicting decided values (by a consensus step, receiving a block, or a checkpoint). 
In this case, it issues a transaction notifying the contract, which initiates the consensus protocol's forensics procedure. 
Once the other correct nodes observe the notification in the primary chain, they participate in the procedure, sending the necessary data to the contract. 
The contract then identifies the Byzantine nodes and penalizes them by slashing (revoking) their stake. 
The forensics procedure depends on the consensus algorithm's details, and optimization for reducing the primary-chain overhead is outside our scope. 


    \section{Security} \label{sec:security} 




To prove Aegis's security, we first show that the ancestors of a block generated by an active committee were also generated by active committees~(\secQuickRef{sec:security:activeCommittee}).
After that, we prove Aegis achieves agreement~(\secQuickRef{sec:security:agreement}), validity~(\secQuickRef{sec:security:validity}) and progress~(\secQuickRef{sec:security:progress}), and that it penalizes misbehaving nodes~(\secQuickRef{sec:security:slashing}). 
The security guarantees follow~(\secQuickRef{sec:security:conclusion}). 
Appendix~\ref{app:securityAssumptions} summarizes the assumptions Aegis requires.


        \subsection{Active Committee Follows Active Committees} \label{sec:security:activeCommittee}

We call a committee active if its nodes' collateral was locked recently enough, so it could not be withdrawn yet. 

\begin{definition}[Active committee]
A weighted set of nodes~$\committee$ is an \emph{active committee at time~$t$} if there exists a primary-chain block published in the interval~$[t - \deltaActive, t]$ and those nodes staked collateral up to that block and did not unstake by that block. 
\end{definition}


We show that if a block is generated by an active committee then its ancestors were approved by active committees. This is enforced for each block by the committee that generated the block, whether it is defined by a reset or by a previous block whose committee is active. 

\begin{lemma} \label{lem:parentActive}
If at time~$t$, a node observes (i.e., receives or generates) a block~$b$ such that block~$b$'s committee is active at~$t$, then a node (perhaps the same node) receives the parent of~$b$,~$\parent{b}$, at time~$t'$, and~$\parent{b}$'s committee is active at~$t'$. 
\end{lemma}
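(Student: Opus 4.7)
The plan is to exploit the two verification paths that the paper's informal sketch alludes to: either $\parent{b}$ has been checkpointed on the primary chain (in which case the contract has already verified its committee's activity) or it has been accepted into the Aegis chain through the $\isValid$ predicate by $b$'s committee (in which case a $\consensusValidate$ certificate from $\parent{b}$'s committee witnesses such activity). Either path exhibits a correct node that held $\parent{b}$ while $\parent{b}$'s committee was active.

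I would first dispose of the Genesis base case by convention (the Genesis block is universally valid and its ``committee'' is treated as active, or the lemma is invoked only for non-Genesis parents in the surrounding induction). For the main case, observing $b$ with an active committee forces $b$ to satisfy $\isValid$ (Algorithm~\ref{alg:isValid}), and by its recursive clause so does $\parent{b}$; in particular $\parent{b}$ carries a $\consensusValidate$ certificate with respect to $\blockMembership{B}$, where $B$ is either $\LOneResetRef{\parent{b}}$ or $\LOneRef{\parent{\parent{b}}}$. On the primary-chain path---applicable when $\parent{b}$ has been checkpointed, which is in particular forced whenever $b$ follows a reset, because the reset-admission condition on line~\ref{alg:contract:resetOnlyAfterStale} together with the protocol's choice of parent pins $\parent{b}$ to the latest checkpointed block---the assertion on line~\ref{alg:contract:timeValidate} of Algorithm~\ref{alg:contract} witnesses that $\parent{b}$'s committee was active at the checkpoint time, and the node that submitted the checkpoint held $\parent{b}$ at a moment within that window. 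On the consensus-certificate path, the assumption that every committee has correct-stake ratio above $\consensusThreshold$ rules out certificates assembled from Byzantine contributions alone, so at least one correct node of $\parent{b}$'s committee contributed to the certificate; because the model binds a correct node to the protocol only while its stake is locked, that contribution was produced within the active window of $\parent{b}$'s committee, and the signer held $\parent{b}$ at that time.

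The principal obstacle is the consensus-certificate path: formally linking the existence of a $\consensusValidate$-passing certificate to the timing of correct-node signing. It rests on the model's stipulation that a correct node is correct only while staked, together with cryptographic unforgeability; making this precise may require strengthening the abstract specification of \consensusStep beyond the three listed properties so that a certificate uniquely witnesses participation during its committee's active window, rather than merely being a set of signatures that could in principle be assembled at any later time.
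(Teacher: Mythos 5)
Your proposal diverges from the paper's at the decisive step: you try to extract the activity witness for $\parent{b}$'s committee from the $\consensusValidate$ certificate on $\parent{b}$ itself (arguing that, by the $\consensusThreshold$ assumption, some correct member of $\parent{b}$'s committee must have contributed to it), whereas the paper pivots through a correct node in $b$'s committee. The obstacle you flag at the end is real and fatal to your non-checkpoint branch: a $\consensusValidate$-passing certificate is, as you say, just a set of signatures, and nodes that were correct while staked but have since unstaked are free to sign anything afterward, so nothing about the certificate's structure pins its assembly to the window in which $\parent{b}$'s committee was active. The paper does not patch this by strengthening \consensusStep; it abandons the certificate as a timing witness entirely.

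The paper's argument instead notes that, because $b$'s committee is active at $t$ and a quorum approved $b$, some correct node $i$ of $b$'s committee took a consensus step for $b$ at a time $\tilde{t}$ inside $b$'s committee's active interval (here the unpredictability of hashes matters, to fix when the consensus ID could first exist). A correct node reaches the consensus step for $b$ (Algorithm~\ref{alg:aegis}, lines~\ref{alg:aegis:runConsensusAfterReset} or~\ref{alg:aegis:runConsensus}) only after having logged $\parent{b}$ in one of three ways, each carrying its own activity witness at the moment of logging: (1) $\parent{b}$ is genesis (line~\ref{alg:aegis:logGenesis}); (2) $\parent{b}$ is reached through a checkpoint (line~\ref{alg:aegis:logFromCheckpoint}), where the contract already verified activity against the primary-chain clock (Algorithm~\ref{alg:contract}, line~\ref{alg:contract:timeValidate}); or (3) $\parent{b}$ is logged past the last checkpoint (line~\ref{alg:aegis:logByPrevConsensus}), which $i$ does only under the guard at line~\ref{alg:aegis:isPrevBlocksCommitteeActive}, i.e., only while $\parent{b}$'s committee is still active. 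In every branch the timing comes from what a correct, currently-active node actually did at the moment it logged $\parent{b}$, not from the provenance of a signature set. Your checkpoint branch essentially matches case (2); your consensus-certificate branch should be replaced by this fresh-witness argument, which removes the need for any stronger assumption on \consensusStep.
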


\begin{proof} 
If a node observes a valid block~$b$, approved by the committee defined by it, there are two options: 
If the committee is defined by its parent $\LOneRef{\parent{b}}$, then the committee is active between~$\blockTime{\LOneRef{\parent{b}}}$ and~$t \ge \blockTime{b}$. 
If the committee is defined by a reset pointed by~$\LOneResetRef{b}$, then the committee is active between~$\blockTime{\LOneResetRef{b}}$ and~$t \ge \blockTime{b}$. 
In both cases, a quorum of the committee nodes approved block~$b$ during this interval---recall the block hashes are unpredictable, so the consensus ID is generated on the generation of the primary-ledger block. 
By the lemma assumption, during this time (up to~$t$) the committee is active; therefore, there is a time~$\tilde{t}$ where a correct committee node~$i$ confirmed the correctness of~$\parent{b}$: A node only participates in block generation (lines~\ref{alg:aegis:runConsensusAfterReset} or~\ref{alg:aegis:runConsensus}) after approving its predecessor. 
Node~$i$ logged the parent block~$b'$ and validated its consensus in one of three ways: 
(1) It is the genesis block (line~\ref{alg:aegis:logGenesis}), which needs no validation; 
(2) the parent is checkpointed (line~\ref{alg:aegis:logFromCheckpoint}), either following a series of resets or not; then~$b'$ was approved by an active committee by the time it was checkpointed, as verified by the contract (Algorithm~\ref{alg:contract} line~\ref{alg:contract:timeValidate}); or
(3) the parent is not checkpointed; in this case the node~$i$ that logged~$b$ only logged~$b'$ because the committee of~$b'$ is active at~$\tilde{t}$ (Algorithm~\ref{alg:aegis} line~\ref{alg:aegis:isPrevBlocksCommitteeActive}) and the block is valid (lines~\ref{alg:aegis:chooseBprimeBeforeLoop}, \ref{alg:aegis:chooseBprimeInLoop}). 

In conclusion, the parent block~$b'$ was approved by an active committee. 
\end{proof} 
    
We next deduce that the ancestors of a block approved by an active committee are all approved by active committees. 
The proof follows by induction. 

\begin{lemma} \label{lem:ancestorsValid}
If at time~$t$ a node observes a block~$b$ such that block~$b$'s committee is active at~$t$ then for all ancestors~$b'$ of~$b$, there exist a node (perhaps the same node) and time~$t'$ such that the node receives $b'$ at~$t'$ and the committee of $b'$ is active at~$t'$. 
\end{lemma}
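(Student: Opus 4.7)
The plan is to prove this by (strong) induction on the height of the ancestor $b'$, working downward from $b$ toward the Genesis block, with Lemma~\ref{lem:parentActive} providing the inductive step essentially for free.

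First I would set up the induction. For an ancestor $b'$ of $b$, let $d = \blockHeight{b} - \blockHeight{b'}$ denote its distance from $b$ along the chain. The base case $d = 0$ (i.e., $b' = b$) is immediate from the hypothesis of the lemma: the observing node and time $t$ from the statement serve as witnesses. For the inductive step, I would assume that some ancestor $b''$ at distance $d$ was observed by some node (not necessarily the original one) at a time $t''$ at which $b''$'s committee is active. Then I would apply Lemma~\ref{lem:parentActive} to $b''$, which directly yields a node and a time $t'$ at which $\parent{b''}$ is observed and its committee is active. Since $\parent{b''}$ is the ancestor at distance $d+1$, this completes the inductive step.

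I would also note the edge case that the recursion terminates at Genesis: once we reach $\bGenesis$, the statement is vacuous in the sense of ``ancestors,'' since Genesis has no parent ($\parent{\bGenesis} = \bot$), so the induction only needs to run for ancestors strictly between $b$ and Genesis (inclusive of the last non-Genesis ancestor). Lemma~\ref{lem:parentActive}'s proof implicitly handled the Genesis case as option~(1), so applying it to the immediate child of Genesis still works: the Genesis block itself does not need an ``active committee'' witness, and we can stop the induction at the first non-Genesis ancestor.

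The hard part is essentially already absorbed into Lemma~\ref{lem:parentActive}, so this proof is mostly a packaging argument. The only subtle point to be careful about is that the observing node and the time can change from one step of the induction to the next---Lemma~\ref{lem:parentActive} only guarantees \emph{some} node observes $\parent{b'')}$ at \emph{some} time $t'$ with the activity property, not that it is the same node or a related time. Thus the induction hypothesis must be stated as an existential over nodes and times, matching the conclusion of Lemma~\ref{lem:parentActive} exactly, which makes chaining the applications straightforward.
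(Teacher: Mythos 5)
Your proposal is correct and is exactly the argument the paper intends: the paper states only that ``the proof follows by induction,'' and your write-up supplies precisely that induction, using Lemma~\ref{lem:parentActive} for the inductive step and correctly keeping the node and time existentially quantified so that the witnesses can change along the chain. The Genesis edge case is also handled appropriately.
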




        \subsection{Agreement} \label{sec:security:agreement}

For Agreement, we need two more lemmas. 
First, we ensure steps taken by the nodes are reflected in the primary chain. 
We show that if the nodes agree on the ledger up to some height, then a subsequent block will be accepted by the primary-chain contract. 

\begin{lemma} \label{lem:allBlocksCheckpointed}
If there are no two correct nodes that log different blocks at the same height for all indices up to~$k-1$ and 
a node receives a valid block with index~$k$ at time~$t$, then 
a block of height at least~$k$ is checkpointed by step~$t + \deltaActive$ 
and no resets are registered between~$t$ and~$t + \deltaActive$. 
\end{lemma}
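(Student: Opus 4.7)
The plan is to trace the actions of a correct member of the committee $\committee$ that produced $b$, showing first that they manage to register a checkpoint of height at least $k$ before the committee expires, and second that the resulting sequence of primary-chain entries is dense enough that no reset is accepted anywhere in $[t, t+\deltaActive]$.

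First I would apply Lemma~\ref{lem:ancestorsValid} to $b$. Its committee $\committee$ is active at $t$, so the primary-chain block that defines $\committee$---either $\LOneResetRef{b}$ or $\LOneRef{\parent{b}}$---has time $\tau$ satisfying $t-\deltaActive \leq \tau \leq t$. Validity of $b$ requires \consensusStep to produce $b$ at some earlier step with a correct-weight quorum of $\committee$; by Algorithm~\ref{alg:aegisB} line~\ref{alg:aegis:enoughTimeToExtend} a correct node stops invoking \consensusStep after $t_0 + \deltaActive - 3\deltaPrimaryWrite$ (in terms of its local $t_0$), so any correct quorum member has $b$ in hand by that step. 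The no-conflict hypothesis on heights up to $k-1$ ensures this node's local ledger agrees with $b$'s ancestors, so the block it is poised to checkpoint has height at least $k$.

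Next, line~\ref{alg:aegis:checkpointLastMinute} causes this correct node to submit its checkpoint transaction at the prescribed step; by the primary-write bound~\deltaPrimaryWrite the transaction is included in the primary chain strictly before the committee-activity deadline. The contract assertions in Algorithm~\ref{alg:contract} all succeed: the committee-activity check (line~\ref{alg:contract:timeValidate}) by the timing; \consensusValidate (line~\ref{alg:contract:consensusValidate}) by the validity of $b$; and the strict-height assertion, because otherwise a height-$\geq k$ checkpoint already exists and the first claim is established trivially. Hence a block of height at least $k$ is registered by step $\tau + \deltaActive \leq t + \deltaActive$. For the no-reset claim, the contract rule on line~\ref{alg:contract:resetOnlyAfterStale} forbids a reset at step $s$ whenever any entry lies in $(s-\deltaActive, s]$, so every entry blocks new resets for the next~\deltaActive steps. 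The existing $\ELast$ has $\blockTime{\ELast} \geq t - \deltaActive$---else a reset would already be eligible and would itself be the current $\ELast$---excluding resets through $\blockTime{\ELast}+\deltaActive$, and the new checkpoint at time $t_\textit{ck} < \blockTime{\ELast}+\deltaActive$ extends the exclusion through $t_\textit{ck}+\deltaActive$; since correct nodes keep issuing fresh checkpoints before each window expires as long as they have a new block, the gluing of these intervals covers $[t, t+\deltaActive]$.

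I expect the main technical obstacle to be the case in which $\committee$ is inherited from $\LOneRef{\parent{b}}$ rather than installed by a reset and $\ELast$ is an older checkpoint: here the local $t_0$ driving the deadline on line~\ref{alg:aegis:checkpointLastMinute} is $\blockTime{\ELast}$, which is generally earlier than $\tau$, and one must use the no-conflict hypothesis at heights up to $k-1$ together with a density invariant for the preceding checkpoints to show both that a correct member of $\committee$ still has enough time to participate, receive $b$, and checkpoint it, and that no $\deltaActive$-wide entry-free window opens inside $[t, t+\deltaActive]$ before the new checkpoint is registered.
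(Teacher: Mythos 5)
Your proposal follows the same broad strategy as the paper's proof---identify a correct quorum member of $b$'s committee, show it issues a checkpoint before the committee expires, argue the contract accepts it, and argue no reset slips in---but it misses the central case analysis that makes the paper's argument go through, and replaces the paper's short reset argument with a more elaborate window-gluing argument that needs more care than you give it.

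The crucial step you elide is: why will a correct committee member even submit a checkpoint of height~$\ge k$? If some checkpoint at a height $k' < k$ for a block that is \emph{not} an ancestor of~$b$ were registered in~\primaryLedger, then $b$ would fail \isValid (line~2, ``a checkpoint in \primaryLedger conflicts with $b$''), no correct node would log or checkpoint it, and the lemma's conclusion would not follow from the argument you give. The paper resolves this explicitly: such a conflicting checkpoint would have passed the contract's \consensusValidate, which requires a quorum of an \emph{active} committee, hence a correct node approved and logged a block at height $k' < k$ that disagrees with $b$'s ancestor at that height---directly contradicting the lemma hypothesis. You use the hypothesis only to align the node's local ledger with $b$'s ancestors, not to rule out conflicting checkpoints, so your proof currently has a hole exactly where the hypothesis is supposed to carry the load. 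Similarly, your final paragraph's worry (that $t_0$ driving line~\ref{alg:aegis:checkpointLastMinute} might be older than~$\tau$) is not where the difficulty lies; the algorithm's timing bounds already handle that, whereas the conflicting-checkpoint case is what the paper treats carefully and you do not.

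Two secondary remarks. First, invoking Lemma~\ref{lem:ancestorsValid} is unnecessary; the paper does not use it here. Second, your interval-gluing argument for ``no resets'' requires $t_\textit{ck} > t$ for the glued exclusion window to cover $[t, t + \deltaActive]$, but your bound $t_\textit{ck} < \blockTime{\ELast} + \deltaActive$ together with $\blockTime{\ELast} > t - \deltaActive$ does not immediately give this; the paper sidesteps the explicit interval bookkeeping by simply observing that the contract (line~\ref{alg:contract:resetOnlyAfterStale}) forbids a reset within~\deltaActive of any entry and a fresh entry (the checkpoint) always arrives before the exclusion lapses.
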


Note that the lemma does not state that the registered checkpoint is for a descendant of the valid block at height~$k$ received. 

\begin{proof}
The block's committee is specified by the primary block its parent points to or by a reset on~$\primaryLedger$. 
Denote that $\primaryLedger$ block by~$B$. 
In both cases, some correct node~$i$ in that committee observes block~$b$. 
If within~$\deltaActive$ steps of~$B$ there is no checkpoint, then node~$i$ issues a checkpoint (Algorithm~\ref{alg:aegis} line~\ref{alg:aegis:checkpointLastMinute}). 
We should show the checkpoint is accepted by the contract. 
There cannot be a reset before~$\deltaActive$ has passed, enforced by the contract (line~\ref{alg:contract:resetOnlyAfterStale}). 
If the contract accepts checkpoints for ancestors of~$b$ then~$b$'s checkpoint will still be accepted when it arrives. 
Suppose the contract accepted checkpoints of blocks at height~$k-1$ or earlier that are not ancestors of~$b$. 
The contract only accepts blocks approved by an active committee~(Algorithm~\ref{alg:contract} line~\ref{alg:contract:consensusValidate}). 
So a correct node in a quorum of an active committee approved a block at a height smaller than~$k$, contradicting the lemma assumption. 

Thus, the only case where the checkpoint for~$b$ is not accepted is if a checkpoint for a block at height~$k$ or more was already accepted, completing the proof. 
%
\end{proof}

We also show that consensus agreement applies to decisions taken by active committees. 

\begin{lemma}[Active committee agreement] \label{lem:activeCommitteeAgreement} 
In an execution of Aegis, an active committee does not decide different values for the same consensus instance. 
\end{lemma}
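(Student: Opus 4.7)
The plan is to reduce Aegis's setting of time-limited active committees to the static-committee consensus model of Section~\ref{sec:model:consensus}, and then invoke its agreement guarantee. Fix an active committee~$\committee$ and a consensus instance~$(\committee, \textit{id})$. Suppose for contradiction that two correct Aegis nodes~$i, j$ in~$\committee$ output distinct non-$\bot$ values~$b \neq b'$ at times~$t, t'$, both within the active window of~$\committee$ (this is what ``an active committee decides'' means: a decision reached while the committee is still active, so the deciding nodes are still incentivized to follow the protocol).

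First I would argue that during the entire active window of~$\committee$, the hypotheses of the underlying consensus protocol hold. The committee is fixed (its weights are determined by stake at a single primary block, either~$\LOneResetRef{}$ or the predecessor's~$\LOneRef{}$, so the ID uniquely fixes~$\committee$), and by the model assumption in~\S\ref{sec:model:elements} the stake-weighted fraction of correct nodes in~$\committee$ exceeds~$\consensusThreshold$. Moreover, the Aegis algorithm calls \consensusStep every step with the matching ID (lines~\ref{alg:aegis:runConsensusAfterReset} and~\ref{alg:aegis:runConsensus}), and the consensus ID, being derived from unpredictable primary-block hashes, was not in use before the committee became active, so no stale state pollutes the instance.

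Next I would perform an indistinguishability argument to bridge to the static-committee model. Construct a hypothetical execution~$\mathcal{E}^\star$ identical to the real execution up to the latest of~$t$ and~$t'$, but in which every node of~$\committee$ that is correct throughout~$[t_0, \max(t,t')]$ remains correct forever and keeps calling \consensusStep with ID~$(\committee, \textit{id})$. Because~$i$ and~$j$ have already produced their outputs within the active window, their local views (messages received, local computations) are identical in the real execution and in~$\mathcal{E}^\star$; hence they still output~$b$ and~$b'$ in~$\mathcal{E}^\star$. But~$\mathcal{E}^\star$ satisfies exactly the premises of the consensus-agreement clause in~\S\ref{sec:model:consensus}, so consensus-agreement forces~$b = b'$, contradicting the assumption.

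The main obstacle is justifying the indistinguishability step cleanly: we must be sure that the consensus protocol's guarantees depend only on the behavior of~$\committee$'s members during the window in which they are still bound to the protocol, and that no later deviation by unstaked nodes can retroactively alter the outputs that~$i$ and~$j$ have already produced. This follows from the fact that \consensusStep is a local function of a node's received messages and local state, and both are fully determined at the moment of decision; later Byzantine behavior by ex-committee members is exactly the kind of adversarial activity the underlying protocol already tolerates up to the~$\consensusThreshold$ bound.
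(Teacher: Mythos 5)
Your proof takes essentially the same route as the paper: construct a hypothetical execution with the same prefix in which the committee remains active forever, and then invoke consensus-agreement plus indistinguishability to transfer the conclusion back to the real Aegis execution. Your choice of cutting the common prefix at $\max(t,t')$ is in fact slightly cleaner than the paper's cut at~$t_1$ (which implicitly requires $t_1$ to be the later of the two decision times for both outputs to appear in the hypothetical execution), so the proposal stands.
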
 

\begin{proof}
Consider an execution~$\sigma$ of Aegis where two correct nodes in a committee active in the time range~${[t_0, t_f]}$ decide in steps~${t_0 \le t_1, t_2, \le t_f}$ two values~$v_1$ and~$v_2$ (one each) for the same consensus instance. 

Let~$\sigma'$ be an execution with the same prefix as Aegis up to~$t_1$, and extended such that the committee remains active forever (its nodes never unstake), as in the consensus protocol assumptions. 
By the agreement property of the consensus protocol, the two nodes cannot decide distinct values. 

Since at~$t_1$ the nodes cannot distinguish between~$\sigma$ and~$\sigma'$, the nodes take the same steps in~$\sigma$ up to~$t_1$, so~${v_1 = v_2}$. 
\end{proof}

We are now ready to prove agreement. 

\begin{proposition}[Agreement] \label{prp:agreement}
Two correct nodes do not log different blocks at the same height. 
\end{proposition}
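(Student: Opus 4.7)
The plan is to prove Proposition~\ref{prp:agreement} by strong induction on the position $k$, relying on Lemmas~\ref{lem:parentActive}--\ref{lem:activeCommitteeAgreement}. The base case $k=0$ is trivial: only the Genesis block is ever logged at height $0$ (line~\ref{alg:aegis:logGenesis}). For the inductive step I assume no two correct nodes disagree at any position $<k$, and suppose for contradiction that correct nodes $i$, $j$ log distinct $b, b'$ at position $k$. Since each node extends its ledger sequentially, both must have previously logged blocks at position $k-1$---namely $\parent{b}$ and $\parent{b'}$---so by the induction hypothesis $\parent{b} = \parent{b'} =: p$.

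I would then case-split on the consensus instance identifiers $(p, \LOneResetRef{b})$ and $(p, \LOneResetRef{b'})$. In the aligned case the IDs coincide, so $b$ and $b'$ carry approvals from a single committee $C$; each correct signer only endorses while staked, so both signings lie within a single active window of $C$, and Lemma~\ref{lem:activeCommitteeAgreement} forces $b = b'$, a contradiction. In the misaligned case the reset references differ, so the committees do too, and I would derive a contradiction by a timing argument on the primary chain. Applying Lemma~\ref{lem:allBlocksCheckpointed} to whichever block is observed first---say $b$ at time $t$---yields a checkpoint $c$ of height at least $k$ accepted by step $t + \deltaActive$, with no reset in the interval $[t, t+\deltaActive]$. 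Combined with the contract's \texttt{assert no entries later than $t - \deltaActive$} clause (line~\ref{alg:contract:resetOnlyAfterStale}), any reset $R'$ that licenses the competing committee of $b'$ must fall either strictly before $b$'s signing window or strictly after $c$. In the former subcase, correct nodes in $b$'s parent-ref committee are already directed by the protocol to participate in $R'$'s committee instead, so $b$ cannot accumulate a correct quorum; in the latter subcase, correct members of $R'$'s committee sync past $c$ (lines~\ref{alg:aegis:logToCheckpointedFirst}--\ref{alg:aegis:logToCheckpointedLast}) and extend from height $\ge k$, so they never sign a block whose parent is $p$. Either way, one of $b, b'$ fails \consensusValidate, contradicting its validity.

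The main obstacle is the misaligned-ID subcase: one must carefully track the relative ordering of the signing times of $b$ and $b'$, the reset $R'$, and the checkpoint $c$, and exploit the fact that correct nodes, at any given instant, participate in exactly one consensus instance selected by the latest entry on the primary chain. The honest-majority assumption then precludes a quorum with valid \consensusValidate for the block whose supporting committee is never engaged by correct nodes. Once this bookkeeping is in place, the inductive step closes and agreement at every height follows.
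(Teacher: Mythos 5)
Your proposal follows the same skeleton as the paper's proof---induction on height, contradiction, case analysis on how the two committees are specified, and appeals to Lemmas~\ref{lem:allBlocksCheckpointed} and~\ref{lem:activeCommitteeAgreement}---and your reorganization around consensus IDs $(\parent{b}, \LOneResetRef{b})$ rather than the paper's parent-ref/reset trichotomy is a clean packaging. The aligned case is right and matches the paper. But the misaligned case, which you yourself flag as ``the main obstacle,'' has genuine gaps.

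First, when the reset $R'$ precedes $b$'s signing window (i.e., $\LOneResetRef{b'}$ is earlier on the primary chain than $\LOneRef{p}$), you argue that \emph{$b$} cannot accumulate a correct quorum because correct nodes are ``directed by the protocol to participate in $R'$'s committee instead.'' This is a participation/liveness argument, and it is not clear that correct nodes declining to run \consensusStep directly makes $b$ fail \consensusValidate---you would need to close that inference carefully. The paper avoids this entirely by observing that it is \emph{$b'$} that is statically invalid: the \isValid ordering check (Algorithm~\ref{alg:isValid}, line~\ref{alg:isValid:resetAndParentRefOrder}) requires $\LOneResetRef{b'}$ to be a descendant of $\LOneRef{\parent{b'}}$, and that fails here, so no correct node logs $b'$. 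This is a one-line structural contradiction rather than an argument about who shows up to vote. Second, your sketch does not cleanly cover the subcase where both $b$ and $b'$ carry distinct non-$\bot$ resets; the paper treats this separately, ordering the two resets, applying Lemma~\ref{lem:allBlocksCheckpointed} to the earlier one to produce a checkpoint of height $\ge k$ before the later reset can be accepted (contract line~\ref{alg:contract:resetOnlyAfterStale}), and then using line~\ref{alg:aegis:checkpointBeforeReset} to show the later reset's committee extends from the checkpoint and cannot produce a block at height $k$. You gesture at the right tools, but the relative ordering of $R$, $R'$, the observation time $t$, and the checkpoint $c$ needs to be pinned down to make the contradiction go through, and the parent-ref-vs-reset and reset-vs-reset subcases really do require distinct arguments (one by \isValid, one by timing), not a single unified one.
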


\begin{proof} 
We prove by induction on the block number~$k$. 

\textbf{Base.  } 
For~$k = 0$ all correct nodes at all times log the Genesis block (Algorithm~\ref{alg:aegis} line~\ref{alg:aegis:logGenesis}; $\ledger(0)$ is never updated elsewhere). 
The contract initially has no checkpoints. 

\textbf{Assumption.  } 
For~$k > 0$, assume that the claim holds for~${k-1}$. 
That is, for all nodes~$i$ and~$j$ (maybe~$i=j$) and times~$t_1 \le t_2$, if node~$i$ logs a block~$b$ at height~$k' \le k-1$ at time~$t_1$ ($\ledger_i^{t_1}(k') \neq \bot$) and node~$j$ logs block~$b'$ at height~$k'$ at time~$t_2$ ($\ledger_j^{t_2}(k') \neq \bot$), then~$\ledger_i^{t_1}(k') = \ledger_j^{t_2}(k')$. 
Also assume there is no checkpoint for a block~$b$ at a height~$k' \le k-1$ and a node~$i$ that logs a block~$b' \neq b$ at height~$k'$. 

\textbf{Step.  } 
Now we prove for~$k$. 
Assume for contradiction that there exist nodes~$i$ and~$j$ (maybe~${i=j}$) and times~$t_1 \le t_2$ such that node~$i$ logs a block~$b$ in position~$k$ at time~$t_1$ and a node~$j$ logs block $b' \neq b$ in position~$k$ at time~$t_2$: $\ledger_i^{t_1}(k) = b \neq b' = \ledger_j^{t_2}(k)$. 

The content of the ledger in positive positions is determined due to checkpoints (line~\ref{alg:aegis:logFromCheckpoint}) or consensus (lines~\ref{alg:aegis:logByPrevConsensus} and~\ref{alg:aegis:logByCurrConsensus}). 


If a node logs a block~$b$ in position~$k$ due to consensus (lines~\ref{alg:aegis:logByPrevConsensus} or~\ref{alg:aegis:logByCurrConsensus}), then it is because 
it received active committee consensus message for~$b$ (activity period verified in line~\ref{alg:aegis:isPrevBlocksCommitteeActive}). 
The committee is defined by~$b$'s parent or specified by a reset. 

If a node logs a block~$b$ in position~$k$ due to a checkpoint (line~\ref{alg:aegis:logFromCheckpoint}), 
then it is because~$b$ is an ancestor of a checkpointed block~$\hat{b}$. 
The checkpointed block~$\hat{b}$ is due to an active committee verified by the contract (Algorithm~\ref{alg:contract} line~\ref{alg:contract:consensusValidate}). 
Therefore~(Lemma~\ref{lem:ancestorsValid}), all ancestors of~$\hat{b}$ are also approved by active committees. 
In particular, block~$b$ is approved by an active committee. 
That committee is either the one defined by the previous block, or specified by a~reset. 

In summary, both blocks~$b$ and~$b'$ are created by active committees, each either defined by a reset or by the committee referenced by the previous block. 

We now consider each of the possible cases. 
By the induction assumption, both~$i$ and~$j$ agree on the previous block. 
If the committees of both blocks are specified by the previous Aegis block, since by assumption they agree on this prior block, that committee decided conflicting values, contradicting the consensus protocol agreement property (Lemma~\ref{lem:activeCommitteeAgreement}). 

Next, if the committees of both~$b$ and~$b'$ are specified by reset blocks, we show it is the same reset block. 
Assume for contradiction that block~$b$ (without loss of generality) is confirmed by the first reset and that~$b'$ is confirmed by a later reset. 
Since block~$b$ is approved by the first reset committee (reset at time~$t_\textit{reset}$), 
by Lemma~\ref{lem:allBlocksCheckpointed} a block of height at least~$k$ is issued a checkpoint before the reset committee becomes stale, with the checkpoint being written by time~$t_\textit{reset} + \deltaActive$ for a block of height at least~$k$.
The contract will only accept another reset after~$t_\textit{reset} + \deltaActive$ (Algorithm~\ref{alg:contract} line~\ref{alg:contract:resetOnlyAfterStale}), so after the checkpoint. 
A committee following a reset considers the latest checkpoint (Algorithm~\ref{alg:aegis} line~\ref{alg:aegis:checkpointBeforeReset}), so a block~$b'$ confirmed by the committee specified by the second reset cannot extend the chain earlier than~$k+1$ and cannot have the same height as~$b$. 
A contradiction. 

Finally, consider the case where the committee of~$b$ (without loss of generality) is specified by its parent and that of~$b'$ is specified by a reset. 
There are two possible cases. 
In the first case, $\LOneRef{b_{k-1}}$ is before~$\LOneResetRef{b'}$ in the primary chain. 
All nodes agree on all blocks up to~$k-1$ and an active node observed block~$b$ with height~$k$, therefore (Lemma~\ref{lem:allBlocksCheckpointed}) a block at height~$k$ or more is checkpointed before the contract accepts any reset. 
So a block at height~$k$ is checkpointed before the reset defining the committee of block~$b'$. 
An Aegis block~$b'$ following such a subsequent reset cannot result in a block at height~$k$, a contradiction. 

The alternative case is that~$\LOneResetRef{b'}$ is before $\LOneRef{b_{k-1}}$. 
In this case block~$b'$ is not valid due to the order of its parent reference and reset (Algorithm~\ref{alg:isValid} line~\ref{alg:isValid:resetAndParentRefOrder}), so no node would log it. Again, a contradiction. 

Having reviewed all cases and reached contradictions, we conclude that no two nodes log different blocks at the same height. 
Since the contract will only checkpoint a block at height~$k$ if its committee is active, a checkpoint can only be produced if an active committee decides a different block at height~$k$. 
We have shown this is impossible, thus a checkpointed block is not different from a valid block received by a node at height~$k$. 

This completes the induction step and thus the proof. 
\end{proof}


        \subsection{No-Batching Validity} \label{sec:security:validity}

To prove no-batching validity, we first note that the consensus protocol guarantees validity of an active committee decision. 
The proof is similar to that of Lemma~\ref{lem:activeCommitteeAgreement}. 

\begin{lemma}[Active committee validity] \label{lem:activeCommitteeValidity} 
In an execution of Aegis, if all committee members of an active committee have the same input value, and Aegis executes the consensus protocol of this committee, then no correct node outputs a different value. 
\end{lemma}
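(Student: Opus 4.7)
The plan is to closely mirror the indistinguishability argument used for Lemma~\ref{lem:activeCommitteeAgreement}, with the Validity property of the underlying consensus protocol replacing its Agreement property. I would start from an arbitrary execution~$\sigma$ of Aegis in which an active committee, whose activity window is some interval~$[t_0, t_f]$, has all members holding the same input value~$v$. Assume for contradiction that some correct node's \consensusStep call returns a non-$\bot$ value~$v' \neq v$ at some step~$t \in [t_0, t_f]$.

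Next, I would construct an auxiliary execution~$\sigma'$ that agrees with~$\sigma$ on all messages, scheduling, local state, and primary-chain state up to step~$t$, but in which no committee member ever unstakes, so the committee remains active indefinitely. This matches the static-committee setting under which the consensus protocol's properties are stated, so its Validity guarantee applies in~$\sigma'$: no correct node's \consensusStep call ever returns a non-$\bot$ value different from~$v$.

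Finally, I would argue that the modification affects only future unstaking events and no message or primary-chain observation made by any correct node through step~$t$; hence no correct node can distinguish~$\sigma$ from~$\sigma'$ by~$t$, and every correct node takes identical steps and produces identical \consensusStep outputs in both executions through~$t$. The decision on~$v'$ at step~$t$ in~$\sigma$ would then also occur in~$\sigma'$, contradicting consensus Validity, and the lemma follows.

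The main subtlety I expect is ensuring the indistinguishability is airtight in the Aegis setting, because Aegis nodes also read from the primary chain, whose contents depend on which stake and unstake transactions have been posted. The point to verify carefully is that unstake orders which complete only after~\deltaActive can be deferred in~$\sigma'$ without altering anything visible before~$t$, so the primary-chain prefix and hence the \isValid and contract checks executed by correct nodes through~$t$ coincide in the two executions; once that is established, the indistinguishability argument goes through as in Lemma~\ref{lem:activeCommitteeAgreement}.
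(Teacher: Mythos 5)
Your proposal matches the paper's approach: the paper explicitly states that the proof of this lemma is analogous to the indistinguishability argument for Lemma~\ref{lem:activeCommitteeAgreement}, and your construction of an auxiliary execution~$\sigma'$ in which no committee member ever unstakes, followed by invoking consensus Validity (rather than Agreement) and arguing indistinguishability through the relevant step, is exactly that adaptation. The subtlety you flag about deferring unstake transactions without perturbing the primary-chain prefix is a reasonable point to verify, but it does not change the route.
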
 

We are now ready to prove no-batching validity. 

\begin{proposition}[No-batching validity] \label{prp:validity}
Aegis achieves no-batching validity. 
\end{proposition}
    
\begin{proof}
We prove by induction on the block number~$k$. 

\textbf{Base.  } 
For~$k = 0$ all correct nodes at all times log the Genesis block (Algorithm~\ref{alg:aegis} line~\ref{alg:aegis:logGenesis}; $\ledger(0)$ is never updated elsewhere). 
The contract initially has no checkpoints. 

\textbf{Assumption.  } 
For~$k > 0$, assume that the claim holds for~${k-1}$. 
That is, for all nodes~$i, i'$ (perhaps~$i = i'$), if node~$i'$ logs a block~$b$ at height~$k' \le k-1$ and node~$i$ logs a block at height~$k' - 1$ at time~$t$ and all nodes are correct with input~$v$ from time~$t$ ($\forall j, t'' > t: \inputFuncITK{j}{t''}{k'} = v_{k'}$) then node~$i'$ logs~$v$ at height~$k'$:~${\ledger_{i'}^{t_1}(k') = v_{k'}}$. 
Also assume there is no checkpoint for a block~$b$ at a height~$k' \le k-1$ with value different from~$v_{k'}$. 

\textbf{Step.  } 
Now we prove for~$k$. 
Consider 
a node~$i$ and time~$t$ such that node~$i$ logs a block~$b$ in position~$k$ at time~$t$ and $\ledger_i^{t_1}(k) = b \neq v_{k}$. 

The content of the ledger in positive positions is determined due to checkpoints (line~\ref{alg:aegis:logFromCheckpoint}) or consensus (lines~\ref{alg:aegis:logByPrevConsensus} and~\ref{alg:aegis:logByCurrConsensus}). 

If a node logs a block~$b$ in position~$k$ due to consensus (lines~\ref{alg:aegis:logByPrevConsensus} or~\ref{alg:aegis:logByCurrConsensus}), then it is because it received active committee consensus message for~$b$ (activity period verified in line~\ref{alg:aegis:isPrevBlocksCommitteeActive}). 
The committee is defined by~$b$'s parent or specified by a reset. 

If a node logs a block~$b$ in position~$k$ due to a checkpoint (line~\ref{alg:aegis:logFromCheckpoint}), 
then it is because~$b$ is an ancestor of a checkpointed block~$\hat{b}$. 
The checkpointed block~$\hat{b}$ is due to an active committee verified by the contract (Algorithm~\ref{alg:contract} line~\ref{alg:contract:consensusValidate}). 
Therefore~(Lemma~\ref{lem:ancestorsValid}), all ancestors of~$\hat{b}$ are also approved by active committees. 
In particular, block~$b$ is approved by an active committee. 
That committee is either the one defined by the previous block, or specified by a reset. 

In both cases, the active committee nodes call the input function only after observing the~${k - 1}$ block (lines~\ref{alg:aegis:runConsensus} or~\ref{alg:aegis:runConsensusAfterReset}). 
Thus, in both cases, all inputs of the consensus nodes are read from the input function $\inputFuncITK{}{}{\cdot}$ after they logged block~${k - 1}$. 
If the input function returns the same value~$v_k$ for all correct nodes after the first decision on block~${k - 1}$, this will be the input of all the active-committee nodes for block~$k$. 
The consensus validity property (Lemma~\ref{lem:activeCommitteeValidity}) guarantees that the consensus protocol outputs this value and this is their decision. 

This completes the induction step and thus the proof. 
\end{proof}


        \subsection{Progress} \label{sec:security:progress}

Finally, we show Aegis extends the ledger after~$\gst$. 
To prove progress, we first show that consensus termination applies to active Aegis committees after~$\gst$. 

\begin{lemma}[Active committee termination] \label{lem:activeCommitteeTermination} 
In an execution of Aegis, after~$\gst$, a node in an active committee decides within~$\deltaConsensus$ steps. 
\end{lemma} 

\begin{proof}
Consider an execution~$\sigma$ of Aegis where a node in an active committee participates in a consensus instance in a step~${t \ge \gst}$. 

Let~$\sigma'$ be an execution with the same prefix as Aegis up to~${t + \deltaConsensus}$, and extended such that the committee remains active forever (its nodes never unstake), as in the consensus protocol assumptions. 
By the termination property of the consensus protocol, the node decides by~$t + \deltaConsensus$. 

Since at~$t + \deltaConsensus$ the nodes cannot distinguish between~$\sigma$ and~$\sigma'$, the nodes take the same steps in~$\sigma$ up to~$t + \deltaConsensus$, so the node decides in~$\sigma$ by~$t + \deltaConsensus$ in~$\sigma$. 
\end{proof}

We can now prove that Aegis achieves the progress. 

\begin{proposition}[Progress] \label{prp:progress} 
Aegis guarantees progress. 
\end{proposition}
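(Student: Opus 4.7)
The plan is to identify a time $t^* > \gst$ after which the system enters a steady state and then show that new blocks are logged at the required rate. The argument has three stages: establishing a stable active committee, bootstrapping the first post-$\gst$ block via that committee, and showing indefinitely repeated extension by piggy-backing checkpoints.

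First I would establish a stabilization point. After $\gst$, no node stakes or unstakes, so the set of active nodes is fixed and a constant fraction above $\consensusThreshold$ are correct. The committee currently specified on the primary chain (by the latest checkpoint or reset) may however have become stale. If this is the case, every correct node reaches one of the reset-issuing branches (Algorithm~\ref{alg:aegis} line~\ref{alg:aegis:resetAfterReset} or~\ref{alg:aegis:resetAfterCheckpoint}), so within $\deltaPrimaryWrite$ a reset is accepted by the contract (the contract's constraint in line~\ref{alg:contract:resetOnlyAfterStale} is satisfied because the previous entry is stale by construction). The resulting committee is active by definition, since its specifying block is the current primary block. Call this time $t_1$.

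Second, I would use the consensus and network guarantees to produce the first block after $t_1$. Starting from $t_1$, every correct node observes the reset, proceeds past the sync phases, and (since $t < t_0 + \deltaActive - 3\deltaPrimaryWrite$ at $t_1$) enters \consensusStep at line~\ref{alg:aegis:runConsensusAfterReset} with the same consensus ID. By Lemma~\ref{lem:activeCommitteeTermination}, every correct node decides within $\deltaConsensus$ steps, and by the partial synchrony assumption the decided block propagates to all correct nodes within an additional $\deltaPropagation$. A correct committee node then issues a checkpoint for this block at line~\ref{alg:aegis:checkpointLastMinute} no later than $\blockTime{\ELast} + \deltaActive - 3\deltaPrimaryWrite$; the checkpoint is accepted by the contract (Algorithm~\ref{alg:contract}) because the asserts in lines~\ref{alg:contract:timeValidate} and~\ref{alg:contract:consensusValidate} hold --- the committee is still active and the block was produced by it. Once the checkpoint is accepted, the primary block referenced by the new Aegis block defines the next committee, which is active at the time of its specifying primary block; so the pattern repeats.

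Third, I would iterate the above to obtain a sustained production rate. Once the steady state begins at some $t^* > t_1$, each successive consensus instance is entered by all correct nodes simultaneously (up to message delays), terminates within $\deltaConsensus$ by Lemma~\ref{lem:activeCommitteeTermination}, and the resulting block propagates within $\deltaPropagation$, so each correct node logs one new block at line~\ref{alg:aegis:logByCurrConsensus} every $\deltaConsensus + \deltaPropagation$ steps. Agreement (Proposition~\ref{prp:agreement}) ensures all correct nodes log the same block at each height, so no one stalls re-validating divergent branches. As long as checkpoints are placed within each $\deltaActive$ window --- which they are, by the deadline on line~\ref{alg:aegis:checkpointLastMinuteFirst} and the bound $\deltaPrimaryWrite$ on primary-chain writes --- each new block's committee remains active and the chain advances forever at rate $1/(\deltaConsensus + \deltaPropagation)$.

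The main obstacle is the careful timing accounting: showing that the slack between $\deltaActive$ and the sum $\deltaConsensus + \deltaPropagation + O(\deltaPrimaryWrite)$ is large enough that every freshly chosen committee has time to both finish consensus and get its block checkpointed before it goes stale, so the system cannot get trapped in a reset--reset cycle. This requires threading through Algorithm~\ref{alg:aegis}'s guards (lines~\ref{alg:aegis:resetEnoughTimeCheck}, \ref{alg:aegis:enoughTimeToExtend}, \ref{alg:aegis:checkpointLastMinuteFirst}) and the contract's acceptance condition to confirm that once a first committee succeeds after $\gst$, no subsequent one is forced to issue a reset, so the per-block $\deltaPrimaryWrite$ of checkpointing never appears in the amortized block rate.
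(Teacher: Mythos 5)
Your proposal is correct and follows essentially the same route as the paper's proof: use a post-$\gst$ reset to bootstrap an active committee, invoke Lemma~\ref{lem:activeCommitteeTermination} for each consensus instance, and iterate with propagation to obtain the steady-state rate $1/(\deltaConsensus + \deltaPropagation)$. The ``main obstacle'' you flag at the end is exactly what the paper dispatches by invoking the parameter relation $\deltaConsensus < \deltaActive - \deltaPrimaryWrite$, which guarantees that a freshly seated post-$\gst$ committee always decides and checkpoints before going stale, so the reset--reset cycle you worry about cannot occur; you identified the right condition to check but stopped short of naming it.
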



\begin{proof} 
A node adds a block to the Aegis ledger either since its descendant is checkpointed (Algorithm~\ref{alg:aegis} line~\ref{alg:aegis:logFromCheckpoint}) or since it is approved by an active committee (Algorithm~\ref{alg:aegis} line~\ref{alg:aegis:logByPrevConsensus} or~\ref{alg:aegis:logByCurrConsensus}). 
In both cases, the result is that within at most~$\deltaActive$ steps it is checkpointed in the primary chain and seen by all correct nodes (Algorithm~\ref{alg:aegis} line~\ref{alg:aegis:checkpointLastMinute}).

Consider the following worst case scenario.
At time~$t > \gst$ the Aegis checkpoint is later than~$t - \deltaActive + 3\deltaPrimaryWrite$, the nodes take consensus steps and decide within~$\deltaConsensus$ steps (Algorithm~\ref{alg:aegis} line~\ref{alg:aegis:runConsensus}, Lemma~\ref{lem:activeCommitteeTermination}). 
However, if the consensus runs for too long, and its result might not be checkpointed in time, it is abandoned (Algorithm~\ref{alg:aegis} line~\ref{alg:aegis:giveUpTooLate}). 
Then no progress can be made until~$\deltaActive$ after the last checkpoint/reset and a new committee is instated with a new reset (line~\ref{alg:aegis:resetAfterCheckpoint}).
The reset is registered and nodes see it within~$2\deltaPrimaryWrite$ steps, and a new committee reaches consensus within~$\deltaConsensus$ (Lemma~\ref{lem:activeCommitteeTermination}).

Other scenarios only speed up progress:
If the checkpoint is earlier, the reset will be issued faster, and if the consensus following the checkpoint succeeds the block will be created faster.

Overall, the next block is created at most~$\progressStabilizationTime = \deltaConsensus + \deltaPropagation + \deltaActive + 2\deltaPrimaryWrite$ steps after~$\gst$.

From that point on, a new consensus is reached within~$\deltaConsensus$ steps (Lemma~\ref{lem:activeCommitteeTermination}) and propagated to all correct nodes within~\deltaPropagation, at which point a new consensus instance is started. 
The interval between blocks is thus bounded by~${\deltaConsensus + \deltaPropagation}$, as required for progress. 
\end{proof} 


        \subsection{Slashing} \label{sec:security:slashing}

Enforcing the desired behavior of correct nodes while they are active necessitates penalizing them if they successfully violate safety (Agreement or Validity). 
We show Aegis achieves this if the time it takes to unstake is long enough.

\begin{proposition}[Penalty] \label{prp:penalty} 
Assume~$\deltaActive > 3\deltaPrimaryWrite$.
If a node~$i$ belongs to an active committee that violates the consensus safety, it will be penalized. 
\end{proposition} 

\begin{proof}
With even a single Byzantine node in the committee, validity vacuously holds. 
Thus, the only potential safety violation is for agreement, resulting in two distinct nodes deciding different decisions. 
If a node decides a value at height~$k$, then this or another node will issue a checkpoint for a subsequent block at height at least~$k$ within~$\deltaActive - 3\deltaPrimaryWrite$ steps (Algorithm~\ref{alg:aegis} line~\ref{alg:aegis:checkpointLastMinute}). 
Thus, if a node~$i$ decides a value~$v_i^k$ and a node~$j$ decides a value~$v_j^k \neq v_i^k$, then one of them will find out once the first checkpoint is written to the primary chain or earlier through direct communication of a consensus step (line~\ref{alg:aegis:runConsensusAfterReset} or line \ref{alg:aegis:runConsensus}) or from received blocks (line~\ref{alg:aegis:logByPrevConsensus}). 
Once a node detects such an agreement violation, it issues a forensics procedure by publishing the necessary information to the blockchain. 
This process is completed within~$\deltaPrimaryWrite$ steps, at which point its counterpart does the same, completing the on-chain forensics process while the committee of the violating nodes is still active. 
Thus, the contract can penalize the stake, which is yet to be withdrawn. 
\end{proof}
    

        \subsection{Aegis Security} \label{sec:security:conclusion} 

The conclusion follows directly from Propositions~\ref{prp:agreement},~\ref{prp:validity}, and~\ref{prp:progress}, since nodes are penalized (Proposition~\ref{prp:penalty}) for violating safety. 

\begin{theorem} \label{thm:correctness}
Aegis ensures Agreement, No-Batching Validity, and Progress. 
\end{theorem}


    \section{Practical Considerations and Broader Applicability} \label{sec:practicalConcerns} 

While Aegis can be deployed as-is on top of a primary blockchain, there are two practical aspects to consider. 
Aegis requires primary-chain transactions in intervals close to the time for unstaking~$\deltaActive$. 
This implies a tradeoff between short unstaking time and low primary-chain overhead.

In terms of performance, Aegis can change committee membership at a lower frequency than every block. 
This allows using contemporary BFT state machine replication algorithms (e.g.,~\cite{yin2019hotstuff,gelashvili2022jolteon,babel2023mysticeti}) that are more efficient than running a one-shot consensus instance per block.

Beyond stake-based systems, our tethering approach provides a principled solution for secure epoch changes in blockchains more broadly.
Aegis can be adapted to support various committee transition criteria instead of staking and unstaking.
For example, it could trigger based on a certain number of blocks (time-based epochs), based on consensus of a subcommittee, or by a central authority.
To implement such extensions, the smart contract on the primary chain would need additional logic to verify the transition conditions before accepting resets.
In addition, this would require tailoring the node protocol to match the transition logic in the smart contract in order to maintain the security guarantees of Aegis.


    \section{Conclusion} \label{sec:conclusion} 

We present Aegis, a novel tethered blockchain protocol whose nodes can join and leave in a permissionless fashion by staking through a primary-chain smart contract.
Aegis guarantees safety at all times and progress starting shortly after $\gst$ despite asynchronous communication among its nodes up to~$\gst$.
A core assumption allowing Aegis to succeed is that the nodes have synchronous access to the primary chain. 
Without this assumption, no protocol can achieve all desired properties: Agreement, No-Batching Validity and Progress.

The tools to deploy staking for Aegis are ready at hand, e.g., Avalanche subnets~\cite{avalanche2022subnets}; direct implementation with a smart contract; and Ethereum's EigenLayer~\cite{eigenlayer}, which enables a market for stake on Ethereum that secures external tasks as well as Ethereum's consensus.
Thus, Aegis opens the door for a new generation of tethered blockchains supporting truly decentralized applications.


\bibliographystyle{ACM-Reference-Format}
\bibliography{btc}


\begin{thebibliography}{69}


\ifx \showCODEN    \undefined \def \showCODEN     #1{\unskip}     \fi
\ifx \showDOI      \undefined \def \showDOI       #1{#1}\fi
\ifx \showISBNx    \undefined \def \showISBNx     #1{\unskip}     \fi
\ifx \showISBNxiii \undefined \def \showISBNxiii  #1{\unskip}     \fi
\ifx \showISSN     \undefined \def \showISSN      #1{\unskip}     \fi
\ifx \showLCCN     \undefined \def \showLCCN      #1{\unskip}     \fi
\ifx \shownote     \undefined \def \shownote      #1{#1}          \fi
\ifx \showarticletitle \undefined \def \showarticletitle #1{#1}   \fi
\ifx \showURL      \undefined \def \showURL       {\relax}        \fi
\providecommand\bibfield[2]{#2}
\providecommand\bibinfo[2]{#2}
\providecommand\natexlab[1]{#1}
\providecommand\showeprint[2][]{arXiv:#2}

\bibitem[coi(2024)]%
        {coinmarketcap2024}
 \bibinfo{year}{2024}\natexlab{}.
\newblock \bibinfo{title}{Cryptocurrency Prices, Charts And Market
  Capitalizations | CoinMarketCap}.
\newblock \bibinfo{howpublished}{\url{https://coinmarketcap.com/}}.
\newblock


\bibitem[Abraham and Malkhi(2016)]%
        {abraham2016bvp}
\bibfield{author}{\bibinfo{person}{Ittai Abraham} {and} \bibinfo{person}{Dahlia
  Malkhi}.} \bibinfo{year}{2016}\natexlab{}.
\newblock \showarticletitle{BVP: Byzantine vertical paxos}.
\newblock \bibinfo{journal}{\emph{Proceedings of the Distributed
  Cryptocurrencies and Consensus Ledger (DCCL), Chicago, IL, USA}}
  \bibinfo{volume}{25} (\bibinfo{year}{2016}).
\newblock


\bibitem[Arbitrum(2023)]%
        {arbitrum2023anytrust}
\bibfield{author}{\bibinfo{person}{Arbitrum}.} \bibinfo{year}{2023}\natexlab{}.
\newblock \bibinfo{title}{Arbitrum Nova}.
\newblock \bibinfo{howpublished}{\url{https://arbitrum.io/anytrust}}.
\newblock


\bibitem[Aublin et~al\mbox{.}(2015)]%
        {aublin2015next}
\bibfield{author}{\bibinfo{person}{Pierre-Louis Aublin},
  \bibinfo{person}{Rachid Guerraoui}, \bibinfo{person}{Nikola
  Kne{\v{z}}evi{\'c}}, \bibinfo{person}{Vivien Qu{\'e}ma}, {and}
  \bibinfo{person}{Marko Vukoli{\'c}}.} \bibinfo{year}{2015}\natexlab{}.
\newblock \showarticletitle{The next 700 BFT protocols}.
\newblock \bibinfo{journal}{\emph{ACM Transactions on Computer Systems (TOCS)}}
  \bibinfo{volume}{32}, \bibinfo{number}{4} (\bibinfo{year}{2015}),
  \bibinfo{pages}{1--45}.
\newblock


\bibitem[Azouvi et~al\mbox{.}(2020)]%
        {azouvi2020winkle}
\bibfield{author}{\bibinfo{person}{Sarah Azouvi}, \bibinfo{person}{George
  Danezis}, {and} \bibinfo{person}{Valeria Nikolaenko}.}
  \bibinfo{year}{2020}\natexlab{}.
\newblock \showarticletitle{Winkle: Foiling long-range attacks in
  proof-of-stake systems}. In \bibinfo{booktitle}{\emph{Proceedings of the 2nd
  ACM Conference on Advances in Financial Technologies}}.
\newblock


\bibitem[Azouvi and Vukoli{\'c}(2022)]%
        {azouvi2022pikachu}
\bibfield{author}{\bibinfo{person}{Sarah Azouvi} {and} \bibinfo{person}{Marko
  Vukoli{\'c}}.} \bibinfo{year}{2022}\natexlab{}.
\newblock \showarticletitle{Pikachu: Securing {PoS} blockchains from long-range
  attacks by checkpointing into bitcoin {PoW} using taproot}. In
  \bibinfo{booktitle}{\emph{Proceedings of the 2022 ACM Workshop on
  Developments in Consensus}}. \bibinfo{pages}{53--65}.
\newblock


\bibitem[Babel et~al\mbox{.}(2023)]%
        {babel2023mysticeti}
\bibfield{author}{\bibinfo{person}{Kushal Babel}, \bibinfo{person}{Andrey
  Chursin}, \bibinfo{person}{George Danezis}, \bibinfo{person}{Lefteris
  Kokoris-Kogias}, {and} \bibinfo{person}{Alberto Sonnino}.}
  \bibinfo{year}{2023}\natexlab{}.
\newblock \showarticletitle{Mysticeti: Low-Latency {DAG} Consensus with Fast
  Commit Path}.
\newblock \bibinfo{journal}{\emph{arXiv preprint arXiv:2310.14821}}
  (\bibinfo{year}{2023}).
\newblock


\bibitem[Back et~al\mbox{.}(2014)]%
        {back2014enabling}
\bibfield{author}{\bibinfo{person}{Adam Back}, \bibinfo{person}{Matt Corallo},
  \bibinfo{person}{Luke Dashjr}, \bibinfo{person}{Mark Friedenbach},
  \bibinfo{person}{Gregory Maxwell}, \bibinfo{person}{Andrew Miller},
  \bibinfo{person}{Andrew Poelstra}, \bibinfo{person}{Jorge Tim{\'o}n}, {and}
  \bibinfo{person}{Pieter Wuille}.} \bibinfo{year}{2014}\natexlab{}.
\newblock \showarticletitle{Enabling blockchain innovations with pegged
  sidechains}.
\newblock \bibinfo{journal}{\emph{URL: http://www. opensciencereview.
  com/papers/123/enablingblockchain-innovations-with-pegged-sidechains}}
  \bibinfo{volume}{72} (\bibinfo{year}{2014}), \bibinfo{pages}{201--224}.
\newblock


\bibitem[Bano et~al\mbox{.}(2019)]%
        {bano2019sok}
\bibfield{author}{\bibinfo{person}{Shehar Bano}, \bibinfo{person}{Alberto
  Sonnino}, \bibinfo{person}{Mustafa Al-Bassam}, \bibinfo{person}{Sarah
  Azouvi}, \bibinfo{person}{Patrick McCorry}, \bibinfo{person}{Sarah
  Meiklejohn}, {and} \bibinfo{person}{George Danezis}.}
  \bibinfo{year}{2019}\natexlab{}.
\newblock \showarticletitle{SoK: Consensus in the age of blockchains}. In
  \bibinfo{booktitle}{\emph{Proceedings of the 1st ACM Conference on Advances
  in Financial Technologies}}. \bibinfo{pages}{183--198}.
\newblock


\bibitem[Barber et~al\mbox{.}(2012)]%
        {barber2012bitter}
\bibfield{author}{\bibinfo{person}{Simon Barber}, \bibinfo{person}{Xavier
  Boyen}, \bibinfo{person}{Elaine Shi}, {and} \bibinfo{person}{Ersin Uzun}.}
  \bibinfo{year}{2012}\natexlab{}.
\newblock \showarticletitle{Bitter to better, how to make {Bitcoin} a better
  currency}.
\newblock In \bibinfo{booktitle}{\emph{Financial Cryptography and Data
  Security}}. \bibinfo{publisher}{Springer}, \bibinfo{address}{Bonaire},
  \bibinfo{pages}{399--414}.
\newblock


\bibitem[{Bitcoin community}(2013)]%
        {bitcoin2013protocol}
\bibfield{author}{\bibinfo{person}{{Bitcoin community}}.}
  \bibinfo{year}{2013}\natexlab{}.
\newblock \bibinfo{title}{Protocol Specification}.
\newblock
  \bibinfo{howpublished}{\url{https://en.bitcoin.it/wiki/Protocol\_specification}}.
\newblock


\bibitem[Bonneau et~al\mbox{.}(2015)]%
        {bonneau2015sok}
\bibfield{author}{\bibinfo{person}{Joseph Bonneau}, \bibinfo{person}{Andrew
  Miller}, \bibinfo{person}{Jeremy Clark}, \bibinfo{person}{Arvind Narayanan},
  \bibinfo{person}{Joshua~A. Kroll}, {and} \bibinfo{person}{Edward~W. Felten}.}
  \bibinfo{year}{2015}\natexlab{}.
\newblock \showarticletitle{Research perspectives on {Bitcoin} and
  second-generation cryptocurrencies}. In \bibinfo{booktitle}{\emph{Symposium
  on Security and Privacy}}. IEEE, \bibinfo{address}{San Jose, CA, USA}.
\newblock


\bibitem[Buterin(2014a)]%
        {buterin2014long}
\bibfield{author}{\bibinfo{person}{Vitalik Buterin}.}
  \bibinfo{year}{2014}\natexlab{a}.
\newblock \bibinfo{title}{Long-Range Attacks: The Serious Problem with Adaptive
  Proof of Work}.
\newblock
  \bibinfo{howpublished}{\url{https://blog.ethereum.org/2014/05/15/long-range-attacks-the-serious-problem-with-adaptive-proof-of-work}}.
\newblock


\bibitem[Buterin(2014b)]%
        {buterin2014weak}
\bibfield{author}{\bibinfo{person}{Vitalik Buterin}.}
  \bibinfo{year}{2014}\natexlab{b}.
\newblock \bibinfo{title}{Proof of Stake: How {I} Learned to Love Weak
  Subjectivity}.
\newblock
  \bibinfo{howpublished}{\url{https://blog.ethereum.org/2014/11/25/proof-stake-learned-love-weak-subjectivity}}.
\newblock


\bibitem[Buterin(2015)]%
        {buterin2015slasher}
\bibfield{author}{\bibinfo{person}{Vitalik Buterin}.}
  \bibinfo{year}{2015}\natexlab{}.
\newblock \bibinfo{title}{Slasher: A Punitive Proof-of-Stake Algorithm}.
\newblock
  \bibinfo{howpublished}{\url{https://blog.ethereum.org/2014/01/15/slasher-a-punitive-proof-of-stake-algorithm/}}.
\newblock


\bibitem[Capretto et~al\mbox{.}(2024)]%
        {capretto2024fast}
\bibfield{author}{\bibinfo{person}{Margarita Capretto},
  \bibinfo{person}{Mart{\'\i}n Ceresa}, \bibinfo{person}{Antonio~Fern{\'a}ndez
  Anta}, \bibinfo{person}{Pedro Moreno-S{\'a}nchez}, {and}
  \bibinfo{person}{C{\'e}sar S{\'a}nchez}.} \bibinfo{year}{2024}\natexlab{}.
\newblock \showarticletitle{Fast and Secure Decentralized Optimistic Rollups
  Using Setchain}.
\newblock \bibinfo{journal}{\emph{arXiv preprint arXiv:2406.02316}}
  (\bibinfo{year}{2024}).
\newblock


\bibitem[Castro et~al\mbox{.}(1999)]%
        {castro1999practical}
\bibfield{author}{\bibinfo{person}{Miguel Castro}, \bibinfo{person}{Barbara
  Liskov}, {et~al\mbox{.}}} \bibinfo{year}{1999}\natexlab{}.
\newblock \showarticletitle{Practical byzantine fault tolerance}. In
  \bibinfo{booktitle}{\emph{OsDI}}, Vol.~\bibinfo{volume}{99}.
  \bibinfo{pages}{173--186}.
\newblock


\bibitem[Chauhan et~al\mbox{.}(2018)]%
        {chauhan2018blockchain}
\bibfield{author}{\bibinfo{person}{Anamika Chauhan},
  \bibinfo{person}{Om~Prakash Malviya}, \bibinfo{person}{Madhav Verma}, {and}
  \bibinfo{person}{Tejinder~Singh Mor}.} \bibinfo{year}{2018}\natexlab{}.
\newblock \showarticletitle{Blockchain and scalability}. In
  \bibinfo{booktitle}{\emph{2018 IEEE international conference on software
  quality, reliability and security companion (QRS-C)}}. IEEE,
  \bibinfo{pages}{122--128}.
\newblock


\bibitem[Civit et~al\mbox{.}(2021)]%
        {civit2021polygraph}
\bibfield{author}{\bibinfo{person}{Pierre Civit}, \bibinfo{person}{Seth
  Gilbert}, {and} \bibinfo{person}{Vincent Gramoli}.}
  \bibinfo{year}{2021}\natexlab{}.
\newblock \showarticletitle{Polygraph: Accountable byzantine agreement}. In
  \bibinfo{booktitle}{\emph{2021 IEEE 41st International Conference on
  Distributed Computing Systems (ICDCS)}}. IEEE, \bibinfo{pages}{403--413}.
\newblock


\bibitem[Cosmos(2025)]%
        {cosmos2025interchain}
\bibfield{author}{\bibinfo{person}{Cosmos}.} \bibinfo{year}{2025}\natexlab{}.
\newblock \bibinfo{title}{Cosmos Hub {-} Interchain Security}.
\newblock
  \bibinfo{howpublished}{\url{https://cosmos.github.io/interchain-security/}}.
\newblock


\bibitem[Daian et~al\mbox{.}(2019)]%
        {daian2019snow}
\bibfield{author}{\bibinfo{person}{Phil Daian}, \bibinfo{person}{Rafael Pass},
  {and} \bibinfo{person}{Elaine Shi}.} \bibinfo{year}{2019}\natexlab{}.
\newblock \showarticletitle{Snow white: Robustly reconfigurable consensus and
  applications to provably secure proof of stake}. In
  \bibinfo{booktitle}{\emph{Financial Cryptography and Data Security}}.
  Springer.
\newblock


\bibitem[D'Amato et~al\mbox{.}(2022)]%
        {damato2022attacks}
\bibfield{author}{\bibinfo{person}{Francesco D'Amato}, \bibinfo{person}{Joachim
  Neu}, \bibinfo{person}{Ertem~Nusret Tas}, {and} \bibinfo{person}{David Tse}.}
  \bibinfo{year}{2022}\natexlab{}.
\newblock \showarticletitle{No More Attacks on Proof-of-Stake Ethereum?}
\newblock \bibinfo{journal}{\emph{arXiv preprint arXiv:2209.03255}}
  (\bibinfo{year}{2022}).
\newblock


\bibitem[Decker and Wattenhofer(2015)]%
        {decker2015duplex}
\bibfield{author}{\bibinfo{person}{Christian Decker} {and}
  \bibinfo{person}{Roger Wattenhofer}.} \bibinfo{year}{2015}\natexlab{}.
\newblock \showarticletitle{A Fast and Scalable Payment Network with Bitcoin
  Duplex Micropayment Channels}. In \bibinfo{booktitle}{\emph{Stabilization,
  Safety, and Security of Distributed Systems}}. \bibinfo{publisher}{Springer}.
\newblock


\bibitem[DefiLlama(2025)]%
        {defillama2025restaking}
\bibfield{author}{\bibinfo{person}{DefiLlama}.}
  \bibinfo{year}{2025}\natexlab{}.
\newblock \bibinfo{title}{Restaking TVL Rankings {-} DefiLlama}.
\newblock
  \bibinfo{howpublished}{\url{https://defillama.com/protocols/restaking}}.
\newblock


\bibitem[Dong et~al\mbox{.}(2024)]%
        {dong2024remote}
\bibfield{author}{\bibinfo{person}{Xinshu Dong}, \bibinfo{person}{Orfeas
  Stefanos~Thyfronitis Litos}, \bibinfo{person}{Ertem~Nusret Tas},
  \bibinfo{person}{David Tse}, \bibinfo{person}{Robin~Linus Woll},
  \bibinfo{person}{Lei Yang}, {and} \bibinfo{person}{Mingchao Yu}.}
  \bibinfo{year}{2024}\natexlab{}.
\newblock \showarticletitle{Remote Staking with Economic Safety}.
\newblock \bibinfo{journal}{\emph{arXiv preprint arXiv:2408.01896}}
  (\bibinfo{year}{2024}).
\newblock


\bibitem[Durvasula and Roughgarden(2024)]%
        {durvasula2024robust}
\bibfield{author}{\bibinfo{person}{Naveen Durvasula} {and} \bibinfo{person}{Tim
  Roughgarden}.} \bibinfo{year}{2024}\natexlab{}.
\newblock \showarticletitle{Robust Restaking Networks}.
\newblock \bibinfo{journal}{\emph{arXiv preprint arXiv:2407.21785}}
  (\bibinfo{year}{2024}).
\newblock


\bibitem[Dwork et~al\mbox{.}(1988)]%
        {dwork1988consensus}
\bibfield{author}{\bibinfo{person}{Cynthia Dwork}, \bibinfo{person}{Nancy~A.
  Lynch}, {and} \bibinfo{person}{Larry~J. Stockmeyer}.}
  \bibinfo{year}{1988}\natexlab{}.
\newblock \showarticletitle{Consensus in the presence of partial synchrony}.
\newblock \bibinfo{journal}{\emph{J. {ACM}}} \bibinfo{volume}{35},
  \bibinfo{number}{2} (\bibinfo{year}{1988}), \bibinfo{pages}{288--323}.
\newblock


\bibitem[EigenLayer(2023)]%
        {eigenlayer}
\bibfield{author}{\bibinfo{person}{EigenLayer}.}
  \bibinfo{year}{2023}\natexlab{}.
\newblock \bibinfo{title}{EigenLayer: The Restaking Collective}.
\newblock
  \bibinfo{howpublished}{\url{https://docs.eigenlayer.xyz/overview/whitepaper}}.
\newblock


\bibitem[Fitzi et~al\mbox{.}(2020)]%
        {fitzi2020ledger}
\bibfield{author}{\bibinfo{person}{Matthias Fitzi}, \bibinfo{person}{Peter
  Ga{\v{z}}i}, \bibinfo{person}{Aggelos Kiayias}, {and}
  \bibinfo{person}{Alexander Russell}.} \bibinfo{year}{2020}\natexlab{}.
\newblock \showarticletitle{Ledger combiners for fast settlement}. In
  \bibinfo{booktitle}{\emph{Theory of Cryptography Conference}}. Springer,
  \bibinfo{pages}{322--352}.
\newblock


\bibitem[Gelashvili et~al\mbox{.}(2022)]%
        {gelashvili2022jolteon}
\bibfield{author}{\bibinfo{person}{Rati Gelashvili}, \bibinfo{person}{Lefteris
  Kokoris-Kogias}, \bibinfo{person}{Alberto Sonnino},
  \bibinfo{person}{Alexander Spiegelman}, {and} \bibinfo{person}{Zhuolun
  Xiang}.} \bibinfo{year}{2022}\natexlab{}.
\newblock \showarticletitle{Jolteon and ditto: Network-adaptive efficient
  consensus with asynchronous fallback}. In
  \bibinfo{booktitle}{\emph{International conference on financial cryptography
  and data security}}. Springer.
\newblock


\bibitem[Gnosis(2023)]%
        {gnosis2023chain}
\bibfield{author}{\bibinfo{person}{Gnosis}.} \bibinfo{year}{2023}\natexlab{}.
\newblock \bibinfo{title}{Gnosis Chain Docs}.
\newblock \bibinfo{howpublished}{\url{https://docs.gnosischain.com}}.
\newblock


\bibitem[Gudgeon et~al\mbox{.}(2020)]%
        {gudgeon2020sok}
\bibfield{author}{\bibinfo{person}{Lewis Gudgeon}, \bibinfo{person}{Pedro
  Moreno-Sanchez}, \bibinfo{person}{Stefanie Roos}, \bibinfo{person}{Patrick
  McCorry}, {and} \bibinfo{person}{Arthur Gervais}.}
  \bibinfo{year}{2020}\natexlab{}.
\newblock \showarticletitle{Sok: Layer-two blockchain protocols}. In
  \bibinfo{booktitle}{\emph{Financial Cryptography and Data Security}}.
  Springer.
\newblock


\bibitem[Kalodner et~al\mbox{.}(2018)]%
        {kalodner2018arbitrum}
\bibfield{author}{\bibinfo{person}{Harry Kalodner}, \bibinfo{person}{Steven
  Goldfeder}, \bibinfo{person}{Xiaoqi Chen}, \bibinfo{person}{S~Matthew
  Weinberg}, {and} \bibinfo{person}{Edward~W Felten}.}
  \bibinfo{year}{2018}\natexlab{}.
\newblock \showarticletitle{Arbitrum: Scalable, private smart contracts}. In
  \bibinfo{booktitle}{\emph{27th USENIX Security Symposium (USENIX Security
  18)}}. \bibinfo{pages}{1353--1370}.
\newblock


\bibitem[Karakostas and Kiayias(2021)]%
        {karakostas2021securing}
\bibfield{author}{\bibinfo{person}{Dimitris Karakostas} {and}
  \bibinfo{person}{Aggelos Kiayias}.} \bibinfo{year}{2021}\natexlab{}.
\newblock \showarticletitle{Securing proof-of-work ledgers via checkpointing}.
  In \bibinfo{booktitle}{\emph{2021 IEEE International Conference on Blockchain
  and Cryptocurrency (ICBC)}}. IEEE, \bibinfo{pages}{1--5}.
\newblock


\bibitem[Kelkar et~al\mbox{.}(2020)]%
        {kelkar2020order}
\bibfield{author}{\bibinfo{person}{Mahimna Kelkar}, \bibinfo{person}{Fan
  Zhang}, \bibinfo{person}{Steven Goldfeder}, {and} \bibinfo{person}{Ari
  Juels}.} \bibinfo{year}{2020}\natexlab{}.
\newblock \showarticletitle{Order-fairness for byzantine consensus}. In
  \bibinfo{booktitle}{\emph{Advances in Cryptology--CRYPTO}}. Springer.
\newblock


\bibitem[Kiayias et~al\mbox{.}(2017)]%
        {kiayias2017ouroboros}
\bibfield{author}{\bibinfo{person}{Aggelos Kiayias}, \bibinfo{person}{Alexander
  Russell}, \bibinfo{person}{Bernardo David}, {and} \bibinfo{person}{Roman
  Oliynykov}.} \bibinfo{year}{2017}\natexlab{}.
\newblock \showarticletitle{Ouroboros: A provably secure proof-of-stake
  blockchain protocol}. In \bibinfo{booktitle}{\emph{Annual international
  cryptology conference}}. Springer, \bibinfo{pages}{357--388}.
\newblock


\bibitem[Kokoris-Kogias et~al\mbox{.}(2018)]%
        {kokoris2018omniledger}
\bibfield{author}{\bibinfo{person}{Eleftherios Kokoris-Kogias},
  \bibinfo{person}{Philipp Jovanovic}, \bibinfo{person}{Linus Gasser},
  \bibinfo{person}{Nicolas Gailly}, \bibinfo{person}{Ewa Syta}, {and}
  \bibinfo{person}{Bryan Ford}.} \bibinfo{year}{2018}\natexlab{}.
\newblock \showarticletitle{Omniledger: A secure, scale-out, decentralized
  ledger via sharding}. In \bibinfo{booktitle}{\emph{2018 IEEE symposium on
  security and privacy (SP)}}. IEEE, \bibinfo{pages}{583--598}.
\newblock


\bibitem[Kotla et~al\mbox{.}(2007)]%
        {kotla2007zyzzyva}
\bibfield{author}{\bibinfo{person}{Ramakrishna Kotla}, \bibinfo{person}{Lorenzo
  Alvisi}, \bibinfo{person}{Mike Dahlin}, \bibinfo{person}{Allen Clement},
  {and} \bibinfo{person}{Edmund Wong}.} \bibinfo{year}{2007}\natexlab{}.
\newblock \showarticletitle{Zyzzyva: speculative byzantine fault tolerance}. In
  \bibinfo{booktitle}{\emph{Proceedings of twenty-first ACM SIGOPS symposium on
  Operating systems principles}}. \bibinfo{pages}{45--58}.
\newblock


\bibitem[Kuznetsov and Tonkikh(2022)]%
        {kuznetsov2022asynchronous}
\bibfield{author}{\bibinfo{person}{Petr Kuznetsov} {and}
  \bibinfo{person}{Andrei Tonkikh}.} \bibinfo{year}{2022}\natexlab{}.
\newblock \showarticletitle{Asynchronous reconfiguration with byzantine
  failures}.
\newblock \bibinfo{journal}{\emph{Distributed Computing}} \bibinfo{volume}{35},
  \bibinfo{number}{6} (\bibinfo{year}{2022}), \bibinfo{pages}{477--502}.
\newblock


\bibitem[{L2BEAT}(2025)]%
        {l2beat2024locked}
\bibfield{author}{\bibinfo{person}{{L2BEAT}}.} \bibinfo{year}{2025}\natexlab{}.
\newblock \bibinfo{title}{Value Locked}.
\newblock \bibinfo{howpublished}{\url{https://l2beat.com/scaling}}.
\newblock


\bibitem[Labs(2025)]%
        {avalanche2022subnets}
\bibfield{author}{\bibinfo{person}{Ava Labs}.} \bibinfo{year}{2025}\natexlab{}.
\newblock \bibinfo{title}{Scale with Subnets}.
\newblock \bibinfo{howpublished}{\url{https://www.avax.network/subnets}}.
\newblock


\bibitem[Labs(2023a)]%
        {matter2023decentralized}
\bibfield{author}{\bibinfo{person}{Matter Labs}.}
  \bibinfo{year}{2023}\natexlab{a}.
\newblock \bibinfo{title}{How decentralized is zkSync}.
\newblock
  \bibinfo{howpublished}{\url{https://docs.zksync.io/userdocs/decentralization/\#how-decentralized-is-zksync}}.
\newblock


\bibitem[Labs(2023b)]%
        {polygon2022polygon2}
\bibfield{author}{\bibinfo{person}{Polygon Labs}.}
  \bibinfo{year}{2023}\natexlab{b}.
\newblock \bibinfo{title}{Introducing Polygon 2.0: The Value Layer of the
  Internet}.
\newblock
  \bibinfo{howpublished}{\url{https://polygon.technology/blog/introducing-polygon-2-0-the-value-layer-of-the-internet}}.
\newblock


\bibitem[Lamport et~al\mbox{.}(2009)]%
        {lamport2009vertical}
\bibfield{author}{\bibinfo{person}{Leslie Lamport}, \bibinfo{person}{Dahlia
  Malkhi}, {and} \bibinfo{person}{Lidong Zhou}.}
  \bibinfo{year}{2009}\natexlab{}.
\newblock \showarticletitle{Vertical paxos and primary-backup replication}. In
  \bibinfo{booktitle}{\emph{Proceedings of the 28th ACM symposium on Principles
  of distributed computing}}. \bibinfo{pages}{312--313}.
\newblock


\bibitem[Lamport et~al\mbox{.}(2010)]%
        {lamport2010reconfiguring}
\bibfield{author}{\bibinfo{person}{Leslie Lamport}, \bibinfo{person}{Dahlia
  Malkhi}, {and} \bibinfo{person}{Lidong Zhou}.}
  \bibinfo{year}{2010}\natexlab{}.
\newblock \showarticletitle{Reconfiguring a state machine}.
\newblock \bibinfo{journal}{\emph{ACM SIGACT News}} \bibinfo{volume}{41},
  \bibinfo{number}{1} (\bibinfo{year}{2010}).
\newblock


\bibitem[Lynch(1996)]%
        {lynch1996distributed}
\bibfield{author}{\bibinfo{person}{Nancy~A Lynch}.}
  \bibinfo{year}{1996}\natexlab{}.
\newblock \bibinfo{booktitle}{\emph{Distributed algorithms}}.
\newblock \bibinfo{publisher}{Elsevier}.
\newblock


\bibitem[Motepalli et~al\mbox{.}(2023)]%
        {motepalli2023sok}
\bibfield{author}{\bibinfo{person}{Shashank Motepalli},
  \bibinfo{person}{Luciano Freitas}, {and} \bibinfo{person}{Benjamin
  Livshits}.} \bibinfo{year}{2023}\natexlab{}.
\newblock \showarticletitle{Sok: Decentralized sequencers for rollups}.
\newblock \bibinfo{journal}{\emph{arXiv preprint arXiv:2310.03616}}
  (\bibinfo{year}{2023}).
\newblock


\bibitem[Nakamoto(2008)]%
        {nakamoto2008bitcoin}
\bibfield{author}{\bibinfo{person}{Satoshi Nakamoto}.}
  \bibinfo{year}{2008}\natexlab{}.
\newblock \bibinfo{title}{Bitcoin: A Peer-to-Peer Electronic Cash System}.
\newblock \bibinfo{howpublished}{\url{http://www.bitcoin.org/bitcoin.pdf}}.
\newblock


\bibitem[Optimism(2023)]%
        {optimism2023block}
\bibfield{author}{\bibinfo{person}{Optimism}.} \bibinfo{year}{2023}\natexlab{}.
\newblock \bibinfo{title}{Optimism Block Production}.
\newblock
  \bibinfo{howpublished}{\url{https://community.optimism.io/docs/protocol/2-rollup-protocol/\#block-production}}.
\newblock


\bibitem[Polygon(2023)]%
        {polygon2023pos}
\bibfield{author}{\bibinfo{person}{Polygon}.} \bibinfo{year}{2023}\natexlab{}.
\newblock \bibinfo{title}{Proven scalability on Ethereum}.
\newblock \bibinfo{howpublished}{\url{https://polygon.technology/polygon-pos}}.
\newblock


\bibitem[Poon and Dryja(2015)]%
        {poon2013lightning}
\bibfield{author}{\bibinfo{person}{Joseph Poon} {and} \bibinfo{person}{Thaddeus
  Dryja}.} \bibinfo{year}{2015}\natexlab{}.
\newblock \bibinfo{title}{The {Bitcoin} {Lightning} {Network}, Draft~0.5}.
\newblock
  \bibinfo{howpublished}{\url{http://lightning.network/lightning-network.pdf}}.
\newblock


\bibitem[Pu et~al\mbox{.}(2023)]%
        {pu2023gorilla}
\bibfield{author}{\bibinfo{person}{Youer Pu}, \bibinfo{person}{Ali
  Farahbakhsh}, \bibinfo{person}{Lorenzo Alvisi}, {and} \bibinfo{person}{Ittay
  Eyal}.} \bibinfo{year}{2023}\natexlab{}.
\newblock \showarticletitle{Gorilla: Safe Permissionless Byzantine Consensus}.
  In \bibinfo{booktitle}{\emph{37th International Symposium on Distributed
  Computing}}.
\newblock


\bibitem[Rana et~al\mbox{.}(2022)]%
        {rana2022optimal}
\bibfield{author}{\bibinfo{person}{Ranvir Rana}, \bibinfo{person}{Dimitris
  Karakostas}, \bibinfo{person}{Sreeram Kannan}, \bibinfo{person}{Aggelos
  Kiayias}, {and} \bibinfo{person}{Pramod Viswanath}.}
  \bibinfo{year}{2022}\natexlab{}.
\newblock \showarticletitle{Optimal bootstrapping of pow blockchains}. In
  \bibinfo{booktitle}{\emph{Proceedings of the Twenty-Third International
  Symposium on Theory, Algorithmic Foundations, and Protocol Design for Mobile
  Networks and Mobile Computing}}. \bibinfo{pages}{231--240}.
\newblock


\bibitem[Sankagiri et~al\mbox{.}(2021)]%
        {sankagiri2021blockchain}
\bibfield{author}{\bibinfo{person}{Suryanarayana Sankagiri},
  \bibinfo{person}{Xuechao Wang}, \bibinfo{person}{Sreeram Kannan}, {and}
  \bibinfo{person}{Pramod Viswanath}.} \bibinfo{year}{2021}\natexlab{}.
\newblock \showarticletitle{Blockchain cap theorem allows user-dependent
  adaptivity and finality}. In \bibinfo{booktitle}{\emph{Financial Cryptography
  and Data Security}}. Springer.
\newblock


\bibitem[Sheng et~al\mbox{.}(2021)]%
        {sheng2021bft}
\bibfield{author}{\bibinfo{person}{Peiyao Sheng}, \bibinfo{person}{Gerui Wang},
  \bibinfo{person}{Kartik Nayak}, \bibinfo{person}{Sreeram Kannan}, {and}
  \bibinfo{person}{Pramod Viswanath}.} \bibinfo{year}{2021}\natexlab{}.
\newblock \showarticletitle{{BFT} protocol forensics}. In
  \bibinfo{booktitle}{\emph{Proceedings of the 2021 ACM SIGSAC conference on
  computer and communications security}}. \bibinfo{pages}{1722--1743}.
\newblock


\bibitem[Sompolinsky et~al\mbox{.}(2021)]%
        {sompolinsky2021phantom}
\bibfield{author}{\bibinfo{person}{Yonatan Sompolinsky}, \bibinfo{person}{Shai
  Wyborski}, {and} \bibinfo{person}{Aviv Zohar}.}
  \bibinfo{year}{2021}\natexlab{}.
\newblock \showarticletitle{PHANTOM GHOSTDAG: a scalable generalization of
  Nakamoto consensus}. In \bibinfo{booktitle}{\emph{Proceedings of the 3rd ACM
  Conference on Advances in Financial Technologies}}. \bibinfo{pages}{57--70}.
\newblock


\bibitem[Starkware(2023)]%
        {starkware2023starkex}
\bibfield{author}{\bibinfo{person}{Starkware}.}
  \bibinfo{year}{2023}\natexlab{}.
\newblock \bibinfo{title}{{StarkEx High-Level Overview}}.
\newblock
  \bibinfo{howpublished}{\url{https://docs.starkware.co/starkex/overview.html}}.
\newblock


\bibitem[Steinhoff et~al\mbox{.}(2021)]%
        {steinhoff2021bms}
\bibfield{author}{\bibinfo{person}{Selma Steinhoff}, \bibinfo{person}{Chrysoula
  Stathakopoulou}, \bibinfo{person}{Matej Pavlovic}, {and}
  \bibinfo{person}{Marko Vukoli{\'c}}.} \bibinfo{year}{2021}\natexlab{}.
\newblock \showarticletitle{BMS: Secure decentralized reconfiguration for
  blockchain and BFT systems}.
\newblock \bibinfo{journal}{\emph{arXiv preprint arXiv:2109.03913}}
  (\bibinfo{year}{2021}).
\newblock


\bibitem[Symbiotic(2025)]%
        {symbiotic}
\bibfield{author}{\bibinfo{person}{Symbiotic}.}
  \bibinfo{year}{2025}\natexlab{}.
\newblock \bibinfo{title}{Symbiotic {-} Permissionless Restaking}.
\newblock \bibinfo{howpublished}{\url{https://symbiotic.fi/}}.
\newblock


\bibitem[Szabo(1996)]%
        {szabo1996smart}
\bibfield{author}{\bibinfo{person}{Nick Szabo}.}
  \bibinfo{year}{1996}\natexlab{}.
\newblock \showarticletitle{Smart contracts: building blocks for digital
  markets}.
\newblock \bibinfo{journal}{\emph{EXTROPY: The Journal of Transhumanist
  Thought,(16)}} \bibinfo{volume}{18}, \bibinfo{number}{2}
  (\bibinfo{year}{1996}), \bibinfo{pages}{28}.
\newblock


\bibitem[Tas et~al\mbox{.}(2023)]%
        {tas2023bitcoin}
\bibfield{author}{\bibinfo{person}{Ertem~Nusret Tas}, \bibinfo{person}{David
  Tse}, \bibinfo{person}{Fangyu Gai}, \bibinfo{person}{Sreeram Kannan},
  \bibinfo{person}{Mohammad~Ali Maddah-Ali}, {and} \bibinfo{person}{Fisher
  Yu}.} \bibinfo{year}{2023}\natexlab{}.
\newblock \showarticletitle{Bitcoin-enhanced proof-of-stake security:
  Possibilities and impossibilities}. In \bibinfo{booktitle}{\emph{2023 IEEE
  Symposium on Security and Privacy (SP)}}. IEEE, \bibinfo{pages}{126--145}.
\newblock


\bibitem[Tas et~al\mbox{.}(2024)]%
        {tas2024circuit}
\bibfield{author}{\bibinfo{person}{Ertem~Nusret Tas}, \bibinfo{person}{David
  Tse}, {and} \bibinfo{person}{Yifei Wang}.} \bibinfo{year}{2024}\natexlab{}.
\newblock \showarticletitle{A Circuit Approach to Constructing Blockchains on
  Blockchains}.
\newblock \bibinfo{journal}{\emph{arXiv preprint arXiv:2402.00220}}
  (\bibinfo{year}{2024}).
\newblock


\bibitem[Team({[n.\,d.]})]%
        {babylon2023bitcoinStaking}
\bibfield{author}{\bibinfo{person}{The~Babylon Team}.}
  \bibinfo{year}{[n.\,d.]}\natexlab{}.
\newblock \bibinfo{title}{Bitcoin Staking: Unlocking 21M Bitcoins to Secure the
  Proof-of-Stake Economy}.
\newblock
  \bibinfo{howpublished}{\url{https://docs.babylonlabs.io/papers/btc_staking_litepaper(EN).pdf}}.
\newblock


\bibitem[Wang et~al\mbox{.}(2019)]%
        {wang2019sok}
\bibfield{author}{\bibinfo{person}{Gang Wang}, \bibinfo{person}{Zhijie~Jerry
  Shi}, \bibinfo{person}{Mark Nixon}, {and} \bibinfo{person}{Song Han}.}
  \bibinfo{year}{2019}\natexlab{}.
\newblock \showarticletitle{Sok: Sharding on blockchain}. In
  \bibinfo{booktitle}{\emph{Proceedings of the 1st ACM Conference on Advances
  in Financial Technologies}}. \bibinfo{pages}{41--61}.
\newblock


\bibitem[Wang et~al\mbox{.}(2022)]%
        {wang2022trustboost}
\bibfield{author}{\bibinfo{person}{Xuechao Wang}, \bibinfo{person}{Peiyao
  Sheng}, \bibinfo{person}{Sreeram Kannan}, \bibinfo{person}{Kartik Nayak},
  {and} \bibinfo{person}{Pramod Viswanath}.} \bibinfo{year}{2022}\natexlab{}.
\newblock \showarticletitle{Trustboost: Boosting trust among interoperable
  blockchains}.
\newblock \bibinfo{journal}{\emph{arXiv preprint arXiv:2210.11571}}
  (\bibinfo{year}{2022}).
\newblock


\bibitem[Whittaker et~al\mbox{.}(2021)]%
        {whittaker2021matchmaker}
\bibfield{author}{\bibinfo{person}{M Whittaker}, \bibinfo{person}{N
  Giridharan}, \bibinfo{person}{A Szekeres}, \bibinfo{person}{J Hellerstein},
  \bibinfo{person}{H Howard}, \bibinfo{person}{F Nawab}, {and}
  \bibinfo{person}{I Stoica}.} \bibinfo{year}{2021}\natexlab{}.
\newblock \showarticletitle{Matchmaker Paxos: A Reconfigurable Consensus
  Protocol}.
\newblock \bibinfo{journal}{\emph{Journal of Systems Research (JSys)}}
  \bibinfo{volume}{1}, \bibinfo{number}{1} (\bibinfo{year}{2021}).
\newblock


\bibitem[Wood(2015)]%
        {wood2015yellow}
\bibfield{author}{\bibinfo{person}{Gavin Wood}.}
  \bibinfo{year}{2015}\natexlab{}.
\newblock \bibinfo{title}{Ethereum Yellow Paper}.
\newblock
  \bibinfo{howpublished}{\url{https://web.archive.org/web/20160820211734/http://gavwood.com/Paper.pdf}}.
\newblock


\bibitem[Yin et~al\mbox{.}(2019)]%
        {yin2019hotstuff}
\bibfield{author}{\bibinfo{person}{Maofan Yin}, \bibinfo{person}{Dahlia
  Malkhi}, \bibinfo{person}{Michael~K Reiter}, \bibinfo{person}{Guy~Golan
  Gueta}, {and} \bibinfo{person}{Ittai Abraham}.}
  \bibinfo{year}{2019}\natexlab{}.
\newblock \showarticletitle{HotStuff: BFT consensus with linearity and
  responsiveness}. In \bibinfo{booktitle}{\emph{Proceedings of the 2019 ACM
  Symposium on Principles of Distributed Computing}}.
  \bibinfo{pages}{347--356}.
\newblock


\bibitem[Zhou et~al\mbox{.}(2020)]%
        {zhou2020solutions}
\bibfield{author}{\bibinfo{person}{Qiheng Zhou}, \bibinfo{person}{Huawei
  Huang}, \bibinfo{person}{Zibin Zheng}, {and} \bibinfo{person}{Jing Bian}.}
  \bibinfo{year}{2020}\natexlab{}.
\newblock \showarticletitle{Solutions to scalability of blockchain: A survey}.
\newblock \bibinfo{journal}{\emph{Ieee Access}}  \bibinfo{volume}{8}
  (\bibinfo{year}{2020}), \bibinfo{pages}{16440--16455}.
\newblock


\end{thebibliography}


\appendix

    \section{Notation} \label{app:notation} 

\begin{table}[t!bp]
\centering
\renewcommand{\arraystretch}{1.4}
\caption{Notation}
\label{tab:notation}
\begin{tabular}{|l|p{0.7\columnwidth}|}
\hline
\textbf{Symbol} & \textbf{Meaning} \\ \hline
\consensusThreshold & Minimum ratio of correct nodes in an active committee \\ \hline
\gst & Global Stabilization Time \\ \hline
\deltaPropagation & Network propagation bound after~$\gst$ \\ \hline
\deltaActive & Time between unstake order and funds withdrawal \\ \hline
\deltaPrimaryWrite & Time to write a block to the primary chain \\ \hline
\deltaConsensus & Time to reach consensus after~$\gst$ \\ \hline
$\hash(B) = \hat{B}$ & Hash of block~$B$ \\ \hline
\primaryLedger & Primary ledger \\ \hline
$\primaryLedger(\hat{B})$ & Block~$B$ if~$B \in \primaryLedger$, otherwise~$\bot$ \\ \hline
\ledger & Aegis ledger \\ \hline
$\ledger(k)$ & Aegis block at height~$k$ \\ \hline
$\blockParent{b}$ & Parent block of block~$b$ \\ \hline
$\mathcal{B}(\hat{b})$ & Block~$b$ if~$b$ is in block set $B$, otherwise~$\bot$ \\ \hline
$\LOneRef{b}$ & Reference from block~$b$ to a primary block \\ \hline
$\LOneResetRef{b}$ & Reference from block~$b$ to a reset in a primary block \\ \hline
$\LTwoRef{B}$ & Checkpoint reference from primary block~$b$ to an Aegis block \\ \hline
$\blockMembership{B}$ & Committee defined by primary block~$B$ \\ \hline
$\blockTime{b}$ & Generation time of block~$b$ \\ \hline
\end{tabular}
\end{table}

Table~\ref{tab:notation} summarizes key notation used in this paper.


\section{Security Assumptions} \label{app:securityAssumptions}

To provide a comprehensive view of Aegis's security model, we reiterate its security assumptions.

\begin{enumerate}
    \item \textbf{Primary Blockchain Properties}~(\secQuickRef{sec:model:elements})
    \begin{enumerate}
        \item \textit{Primary blockchain security}. The primary chain is implemented by a trusted party that maintains token balances and staked amounts and supports smart contracts.
        \item \textit{Unpredictable block hashes}. Nodes cannot guess future primary chain block hashes except with negligible probability.
    \end{enumerate}

    \item \textbf{Node Behavior and Stake Properties}~(\secQuickRef{sec:model:elements})
    \begin{enumerate}
        \item \textit{Sufficient unstaking delay}. Unstaking completes~$\deltaActive$ steps after the request transaction is placed.
        \item \textit{Honest majority in committees}. In all committees, the ratio of stake held by correct nodes exceeds~$\consensusThreshold$.
        \item \textit{Stake-based correctness incentive}. Correct nodes follow the protocol while staked to avoid losing stake.
    \end{enumerate}

    \item \textbf{Network Assumptions}~(\secQuickRef{sec:model:elements})
    \begin{enumerate}
        \item \textit{Synchronous primary chain access}. Nodes can observe state and issue transactions within bounded time~$\deltaPrimaryWrite$.
        \item \textit{Partially synchronous node communication}. After~$\gst$, message delivery between nodes is bounded by~$\deltaPropagation$.
        \item \textit{Block propagation persistence}. Blocks are delivered to all nodes, even those staking later.
    \end{enumerate}

    \item \textbf{Consensus Algorithm Properties}~(\secQuickRef{sec:model:consensus})
    \begin{enumerate}
        \item \textit{Underlying consensus guarantees}. If the ratio of correct nodes in a committee exceeds~$\consensusThreshold$, the consensus protocol ensures agreement (no two correct nodes decide different values), validity (if all nodes are correct with same input, no other value is decided), and termination (after~$\gst$, all correct nodes decide within~$\deltaConsensus$ time).
        \item \textit{Consensus validation function}. There exists a function~$\consensusValidate$ that verifies block validity.
        \item \textit{Forensics support}. The consensus protocol allows identifying Byzantine nodes.
    \end{enumerate}

    \item \textbf{Timing Constraints}~(\secQuickRef{sec:model:consensus})
    \begin{enumerate}
        \item \textit{Minimum unstaking delay}. Unstaking is slow enough compared to the time to issue a transaction to the primary chain:~${\deltaActive > 3\deltaPrimaryWrite}$.
    \end{enumerate}
\end{enumerate}

\end{document}